\documentclass[12pt]{article}

\usepackage[T1]{fontenc}
\usepackage{lmodern}
\usepackage[margin=1in]{geometry}
\usepackage{microtype}
\usepackage{amsmath, amssymb, amsthm, mathtools, bm}

\usepackage{booktabs, array, longtable}
\usepackage{graphicx}
\usepackage{lscape}
\usepackage{float}
\usepackage{multirow}
\usepackage{enumitem, xcolor, tikz}
\usepackage[authoryear,round]{natbib}
\usepackage[
  colorlinks=true,
  linkcolor=blue,
  citecolor=blue,
  urlcolor=blue
]{hyperref}
\usepackage{url}
\usepackage{indentfirst}
\usepackage{setspace}
\newenvironment{appendices}{}{}
\newcommand{\E}{\mathbb{E}}

\newcommand{\var}{\mathrm{Var}}

\newcommand{\indicator}{\mathbb{I}}
\newcommand{\Pn}{\mathbb{P}_n}
\newcommand{\Gn}{\mathbb{G}_n}

\newcommand{\T}{\mathrm{T}}
\newcommand*\diff{\mathop{}\!\mathrm{d}}
\newcommand{\indep}{\perp\!\!\!\perp}

\definecolor{ForestGreen}{rgb}{0.0, 0.5, 0.0}

\newcommand{\expit}{\text{expit}}
\newcommand{\logit}{\text{logit}}
\allowdisplaybreaks
\newtheorem{assumption}{Assumption}
\newtheorem{theorem}{Theorem}
\newtheorem{proposition}{Proposition}
\newtheorem{lemma}{Lemma}
\newtheorem{definition}{Definition}
\newtheorem{remark}{Remark}

\newcommand{\oracle}{o}
\newcommand{\vc}{\mathrm{VC}}
\newcommand{\no}{\mathrm{no}}
\newcommand{\EP}{\mathbb{E}_{\mathbb{P}}}
\newcommand{\prob}{\mathbb{P}}

\title{\bf\LARGE Orthogonal Policy Learning with \\ Ordinal Outcomes}
\author{\bf\small Yue Zhang$^1$\thanks{Email: \href{mailto:zhang_yue@stu.pku.edu.cn}{zhang\_yue@stu.pku.edu.cn}},
Shanshan Luo$^2$\thanks{Corresponding author. Email: \href{mailto:shanshanluo@btbu.edu.cn}{shanshanluo@btbu.edu.cn}}, Yangbo He$^1$\thanks{Corresponding author.
Email: \href{mailto:heyb@math.pku.edu.cn}{heyb@math.pku.edu.cn}}\\
\small $^1$School of Mathematical Sciences, Peking University
\\\small $^2$School of Mathematics and Statistics, Beijing Technology and Business University}

\date{}
\hypersetup{pdftitle={Orthogonal Policy Learning with Ordinal Outcomes},
  pdfauthor={Yue Zhang, Shanshan Luo, Yangbo He}}

\begin{document}

\maketitle

\begin{abstract}
Policy learning methods based on conditional average treatment effects can obscure subpopulation heterogeneity when applied to ordinal outcomes. We develop a policy learning framework for ordinal outcomes with heterogeneous utilities for individuals who strictly benefit from treatment and those who do not. Since the probability of strict benefit is only partially identified without further conditions, we adopt a minimax strategy which minimizes the worst-case regret over the identification region. To estimate the resulting nonsmooth objective, we combine smooth approximations with Neyman orthogonalization to remove first-order bias from nuisance estimation. We also derive an excess worst-case regret bound over a restricted policy class. The proposed method is validated through extensive simulations and an application to the 2022 Survey of Income and Program Participation (SIPP) data.
\end{abstract}

\noindent\textbf{Keywords: }Heterogeneous utilities, minimax regret,
Neyman orthogonality, ordinal outcomes, partial identification, policy learning.

\clearpage
\section{Introduction}

Policy learning aims to assign treatments according to individuals' characteristics to maximize population welfare. In many applications, policy decisions depend on outcomes that are ordinal: their categories have a natural ordering, but the distances between categories are not intrinsically defined \citep{volfovsky2015causal}.
Many existing policy learning methods target mean outcomes or conditional average treatment effects, typically in settings with continuous or binary outcomes \citep{zhao2012estimating, kitagawa2018should, athey2021policy}.
Applying such methods to ordinal outcomes requires assigning numerical values to the ordered categories. A conditional average treatment effect computed from these values depends on the chosen coding, which imposes distances between categories that the ordering alone does not justify \citep{lu2018treatment, lu2020sharp}.
Moreover, when applied to ordinal responses, these approaches can mask substantial heterogeneity in how individuals benefit or suffer from potential changes in outcomes, thereby resulting in policies that could be detrimental to individuals adversely affected by treatment.

Accounting for this heterogeneity requires a welfare criterion that distinguishes how individuals are affected by treatment. Standard welfare criteria in policy learning depend on the outcome under the assigned treatment, without separately valuing treatment according to an individual's pair of potential outcomes \citep{kitagawa2018should, athey2021policy}. \citet{cui2025policy} address treatment effect heterogeneity by measuring welfare in terms of the distribution of individual treatment effects rather than their average. \citet{ben2024policy} develop policy learning with asymmetric counterfactual utilities for binary outcomes. Extending a fully specified utility function over all pairs of ordinal potential outcomes would require utilities for up to $J^2$ pairs when there are $J$ outcome levels, motivating a simpler grouping based on whether treatment improves the outcome. Such a grouping builds on the notion of strict benefit, for example, \citet{lu2018treatment} derive sharp bounds on the probabilities of benefit and strict benefit for ordinal outcomes. 
We use this notion to partition the population into a strictly benefited group, for whom treatment produces a better ordinal outcome than control, and a non-benefited group, for whom treatment does not produce a better outcome. We incorporate utilities that differ across these groups into the learning of individualized treatment policies, acknowledging that the welfare consequences for benefited and non-benefited individuals may differ in magnitude and social relevance.

Evaluating welfare under this utility specification requires the conditional probability of strict benefit. Because only one potential outcome is observed for each individual, this probability is generally only partially identified, even when treatment is unconfounded given the observed covariates \citep{lu2018treatment, greenland2020causal}. We therefore adopt a minimax regret criterion: each candidate policy is evaluated by its largest welfare loss relative to the corresponding oracle policy over all values of the probability of strict benefit compatible with the observed data \citep{manski2004statistical,kallus2021minimax, pu2021estimating}.

Learning the policy that minimizes this worst-case regret introduces an estimation challenge. The objective involves nested maximum and minimum operations, which can make the target functional nonpathwise differentiable and prevent direct application of standard semiparametric debiasing methods \citep{tsiatis2006semiparametric}. One approach addresses nonsmooth decision losses through outcome weighting frameworks with surrogate losses, which are often continuous convex upper bounds on the original discontinuous loss \citep{zhao2012estimating, pu2021estimating, cui2025policy}. Although such surrogates facilitate policy optimization, they do not directly resolve the nonregularity arising from the maximum and minimum operations used to construct the partial identification bounds. To address this type of nonregularity, \citet{levis2025covariate} leveraged smoothing techniques to estimate bounds on average treatment effects with a binary instrumental variable, while \citet{whitehouse2025inference} developed a general softmax smoothing and debiasing framework for inference on optimal policy values and other irregular, nondifferentiable functionals. These developments motivate the combination of functional smoothing and debiasing for minimax regret policy learning with ordinal outcomes and heterogeneous utilities.

Our contributions are threefold.
First, we formulate population welfare using heterogeneous utilities for strict benefit and non-benefit, and develop a minimax regret policy criterion that accounts for ambiguity in the partially identified probability of strict benefit.
Second, we construct smooth approximations for the nonsmooth functional and impose Neyman orthogonality on the smoothed worst-case regret to remove first-order bias in nuisance estimation \citep{chernozhukov2018double}.
Third, we derive an asymptotic upper bound on the excess worst-case regret of the learned policy relative to the minimax regret policy within a restricted policy class.

This article is organized as follows.
Section~\ref{sec:preliminary} describes the goal of policy learning with heterogeneous utilities in the setting with ordinal outcomes.
Section~\ref{sec:heterogeneity} discusses the partial identification problem and Section~\ref{sec:orthog-smooth} details the orthogonal smoothing strategy for learning the minimax regret policy with observed data.
Theoretical results for the asymptotic bound of excess worst-case regret are given in Section~\ref{sec:reg-converg}.
Sections~\ref{sec:sim} and~\ref{sec:app} evaluate the framework through simulations and an application to the 2022 Survey of Income and Program Participation (SIPP) data, respectively. Section~\ref{sec:discuss} concludes.


\section{Preliminaries}\label{sec:preliminary}

Suppose that we have access to observational data with $n$ samples of $ O = (X, A, Y) \in \mathcal{O} $, independently and identically drawn from a superpopulation characterized by the joint distribution $ \mathbb{P} $.
Here $X \in \mathcal{X} \subseteq \mathbb{R}^d$ is a vector of baseline covariates, $ A \in \{ 0, 1 \} $ is the treatment variable, where $ A = 1 $ corresponds to receiving the active treatment and $ A = 0 $ denotes control (or no treatment), and $ Y \in \mathcal{J} := \{ 0, 1, \ldots, J - 1 \}$ is an ordinal outcome with $J$ levels, where 0 and $J - 1$ represent the worst and best categories, respectively.
Our goal is to learn an optimal deterministic policy $ \pi: \mathcal{X} \rightarrow \{ 0, 1 \} $ to maximize population welfare with ordinal outcomes.
We adopt the potential outcome framework \citep{rubin1974estimating} and invoke the stable unit treatment value assumption \citep{rubin1980randomization}.
The following assumption is maintained throughout this article.

\begin{assumption}\label{assump:unconf-pos}
(i) (Unconfoundedness) $ Y_{a} \indep A \mid X $ for $ a = 0, 1 $; (ii) (Positivity) $ \prob ( A = a \mid X ) > 0 $ for $ a = 0, 1 $ almost surely.
\end{assumption}

Now we distinguish individuals who benefit from treatment from those who do not, i.e., strictly benefited individuals with $ Y_1 > Y_0 $, and non-benefited individuals with $ Y_1 \leq Y_0 $.
To this end, the population welfare (or, value function) that accounts for heterogeneous subgroups under a given policy $ \pi $ is expressed as
\begin{equation}\label{eq:asym-value}
\begin{aligned}
V(\pi) = \ & \EP \Big[ \Big\{ \pi(X) u_1^b (X) + \left( 1 - \pi(X) \right) u_0^b (X) \Big\} \cdot \indicator (Y_1 > Y_0) \\
& \quad\quad + \Big\{ \pi(X) u_1^n (X) + \left( 1 - \pi(X) \right) u_0^n (X) \Big\} \cdot \indicator (Y_1 \leq Y_0) \Big],
\end{aligned}
\end{equation}
where $u_a^b(X)$ and $u_a^n(X)$ are prespecified, bounded utility functions that depend on covariates $X$, received by each individual in the benefited and non-benefited groups under treatment $A = a$, respectively.
Here $\indicator\{\cdot\}$ denotes the indicator function.
We write $\EP$ for expectation under $\mathbb{P}$, and the superscripts ``b'' and ``n'' represent ``benefited'' and ``non-benefited,'' respectively.

In the binary case, the benefited group corresponds to \((Y_1,Y_0)=(1,0)\), and the remaining three strata are combined as the non-benefited group.
This formulation simplifies the otherwise complex stratification induced by ordinal outcome categories.
To focus on key ideas, we assume that utility functions in \eqref{eq:asym-value} are known throughout this paper.

\begin{remark}
When the benefited and non-benefited groups are assigned identical utilities under treatment $ A = a $, given by $ u_a^b (X) = u_a^n (X) = \E (Y_a \mid X)$, the population welfare \eqref{eq:asym-value} simplifies to $ V(\pi) = \EP \left[ (1 - \pi(X)) Y_0 + \pi (X) Y_1 \right], $ which leads to a specification that does not distinguish utilities across the benefited and non-benefited groups \citep{zhao2012estimating, pu2021estimating, zhang2025optimal}.
In contrast, our proposed value function \eqref{eq:asym-value} incorporates a richer structure that explicitly accounts for heterogeneity across distinct subgroups.
\end{remark}

Denote the probability of strict benefit conditional on $ X $ by $ \eta (X) = \prob ( Y_1 > Y_0 \mid X ) $, which is not point identified in general under Assumption~\ref{assump:unconf-pos} since it involves the joint distribution of $ (Y_1, Y_0) $ \citep{lu2018treatment, gabriel2024sharp}.
Define $ u^b (x) = u_1^{b} (x) - u_0^{b} (x) $ as the utility gained from receiving treatment relative to control in the benefited group for $ x \in \mathcal{X} $.
Similarly, $ u^n (x) = u_1^n (x) - u_0^n (x) $ for the non-benefited group represents the utility loss incurred from receiving treatment compared to control.
Conditioning on \(X=x\), the optimal treatment decision maximizes the conditional welfare, and the resulting oracle policy is
\begin{equation}\label{eq:oracle-pi}
\begin{aligned}
\pi^{\oracle} (x) = \indicator \Big\{ \left\{ u^b (x) - u^n (x) \right\} \eta (x) + u^n (x) > 0 \Big\}.
\end{aligned}
\end{equation}
If the contrast term between the utility gain and loss across two groups, $ u^b (x) - u^n (x) $, does not equal zero, the oracle policy $ \pi^{\oracle} $ need not be point identified in the absence of additional assumptions, as it depends on the unknown probability of strict benefit $ \eta (x) $.

Throughout this paper, we assume $ u^b(x) - u^n(x) > 0 $ for all $x \in \mathcal{X}$. The case in which $ u^b(x) - u^n(x) < 0 $ for all $x \in \mathcal{X}$ can be treated analogously, with the inequality in the oracle treatment rule reversed.
Therefore, the oracle policy $ \pi^{\oracle} (x)$ in \eqref{eq:oracle-pi} can be reformulated as
\begin{equation}\label{eq:asym-shift}
\pi^{\oracle} (x) = \indicator \{ \eta (x) > C_u (x) \}, ~\text{ where } C_u (x) := \dfrac{ -u^n (x) } { u^b (x) - u^n (x) }.
\end{equation}
Here, $C_u(x)$ is the utility threshold for the probability of strict benefit: treatment has positive expected incremental utility for individuals with covariates $X=x$ if and only if $\eta(x)>C_u(x)$.
Given \(X=x\), if \(C_u(x)<0\), then \(\pi^{\oracle}(x)=1\); if \(C_u(x)\geq1\), then \(\pi^{\oracle}(x)=0\). The decision can therefore depend on \(\eta(x)\) only when \(0\leq C_u(x)<1\), which is equivalent to \(u^b(x)>0\geq u^n(x)\). 
In this case, if $u^b(x)=1$ and $u^n(x)=-1$, then the utility loss and gain have equal magnitudes, yielding $C_u(x)=1/2$.
Thus, greater relative harm requires a higher probability of strict benefit to justify treatment.

\section{Policy Learning with Heterogeneous Subgroups}\label{sec:heterogeneity}

\subsection{Partial identification and worst-case regret}

Since the probability of strict benefit $\eta$ is generally not point identified under Assumption~\ref{assump:unconf-pos}, direct optimization of the value function based on the observed data is generally infeasible.
To quantify policy performance under the partial identification framework, we measure regret as $V(\pi^{\oracle}) - V(\pi)$, which is the difference between the population welfare under the oracle policy $\pi^{\oracle}$ and that under $ \pi $ \citep{manski2004statistical}. Specifically,
\begin{equation}\label{eq:regret}
\begin{aligned}
R (\pi; \eta) = \EP \left[ \left( \pi^{\oracle} (X) - \pi (X) \right) \left\{ \eta (X) \left( u^b (X) - u^n (X) \right) + u^n (X) \right\} \right].
\end{aligned}
\end{equation}
The expected utility loss \eqref{eq:regret} extends standard regret to ordinal outcomes by incorporating the unobserved probability of strict benefit $ \eta (X) $; see \cite{ben2024policy} for a related loss in the binary outcome case.

Let $ m_{0, k} (a, X) = \prob (Y_a = k \mid X) $ denote the true conditional probability of the potential outcome $ Y_a $ given $ X $ for $k \in \{ 0, 1, \ldots, J - 1 \}$.
The sharp bounds of $ \eta (X) $ are given by \cite{lu2018treatment}:
\begin{equation}\label{eq:eta-bounds}
\begin{gathered}
\eta_L (X) = \max_{0 \leq j \leq J - 1} \left\{ \sum_{k = j}^{J - 1} m_{0, k} (1, X) - \sum_{k = j}^{J - 1} m_{0, k} (0, X) \right\}, \\
\eta_U (X) = 1 + \min_{0 \leq j \leq J - 1} \left\{ \sum_{k = j + 1}^{J - 1} m_{0, k} (1, X) - \sum_{k = j}^{J - 1} m_{0, k} (0, X) \right\},
\end{gathered}
\end{equation}
where the empty sum equals zero.
Under Assumption~\ref{assump:unconf-pos}, $m_{0,k}(a,X)=\prob(Y=k\mid A=a,X)$ is point identified from the observed data, hence the bounds $\eta_L$ and $\eta_U$ are identifiable as well. 

Since the true $\eta(X)$ is unknown but lies within the interval $[\eta_L(X), \eta_U(X)]$, we evaluate each policy $\pi$ with its worst-case regret over all possible values of $\eta$.
Specifically, we define the worst-case regret as
\begin{equation*}
R_{\sup}(\pi) := \sup_{\eta: ~\eta_L \preceq \eta \preceq \eta_U} R(\pi; \eta),
\end{equation*}
where $\eta_L \preceq \eta \preceq \eta_U$ means $\eta_L(x) \leq \eta(x) \leq \eta_U(x)$ for all $x \in \mathcal{X}$.
We derive the optimal minimax policy that minimizes this worst-case regret in Section~\ref{sec:minimax-policy}.
The following proposition establishes that the worst-case regret $ R_{\sup} (\pi) $ can be identified from the observed data, despite the partial identification of the regret $ R (\pi; \eta) $ itself.

\begin{proposition}\label{prop:max-regret}
Under Assumption~\ref{assump:unconf-pos}, the worst-case regret of a policy $ \pi $ is
\begin{equation*}
\begin{aligned}
R_{\sup} (\pi) = C - \EP \Big[ \pi(X) \left\{ u^b (X) - u^n (X) \right\} \psi (X) \Big],
\end{aligned}
\end{equation*}
where $C$ is a constant that does not depend on $\pi$, and 
\begin{equation}\label{eq:score}
\begin{aligned}
\psi (X) := \psi_L (X) + \psi_U (X),
\end{aligned}
\end{equation}
where $ \psi_L (X) := \min \left\{ \eta_L (X) - C_u (X), 0 \right\} $ and $ \psi_U (X) := \max \left\{ \eta_U (X) - C_u (X), 0 \right\} $.
\end{proposition}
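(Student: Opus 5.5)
The plan is to work pointwise in $x$ and then integrate. Write $\Delta(x) := u^b(x) - u^n(x) > 0$. Since $\eta(x)\Delta(x) + u^n(x) = \Delta(x)\{\eta(x) - C_u(x)\}$, the regret \eqref{eq:regret} becomes
\begin{equation*}
R(\pi;\eta) = \EP\Big[ \Delta(X)\big(\pi^{\oracle}(X) - \pi(X)\big)\big\{\eta(X) - C_u(X)\big\} \Big],
\end{equation*}
with $\pi^{\oracle}(x) = \indicator\{\eta(x) > C_u(x)\}$ as in \eqref{eq:asym-shift}. The key observation is that the integrand at $x$ depends on $\eta$ only through $\eta(x)$. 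The constraint $\eta_L \preceq \eta \preceq \eta_U$ is also pointwise. So the supremum over functions $\eta$ should equal the expectation of the pointwise supremum over $t \in [\eta_L(x), \eta_U(x)]$.

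\textbf{Pointwise maximization.} Fix $x$ and set $p = \pi(x) \in \{0,1\}$, $g(t) = \{\indicator(t > C_u) - p\}(t - C_u)$. If $p = 1$, then $g(t) = -(t - C_u)\indicator(t \le C_u) = \max\{C_u - t, 0\}$. This is nonincreasing in $t$, so its maximum over $[\eta_L, \eta_U]$ is attained at $t = \eta_L$ and equals $-\min\{\eta_L - C_u, 0\} = -\psi_L(x)$. If $p = 0$, then $g(t) = \max\{t - C_u, 0\}$, which is nondecreasing, so the maximum is $\psi_U(x)$, attained at $t = \eta_U$. Hence
\begin{equation*}
\sup_{t} g(t) = \pi(x)\{-\psi_L(x)\} + \{1 - \pi(x)\}\psi_U(x) = \psi_U(x) - \pi(x)\{\psi_L(x) + \psi_U(x)\}.
\end{equation*}
Multiplying by $\Delta(x) > 0$ and taking expectations gives the claimed form. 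The constant is $C = \EP[\Delta(X)\psi_U(X)]$, which does not depend on $\pi$. It is finite because the utilities are bounded and $\eta_U, C_u$ enter only through $\max\{\cdot,0\}$; in the relevant region $C_u$ lies in $[0,1)$, and elsewhere $\Delta\psi_U$ is still controlled by the bounded utilities.

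\textbf{Interchanging sup and expectation.} This is the main obstacle. The inequality $R_{\sup}(\pi) \le \EP[\sup_t \cdots]$ is immediate, since any admissible $\eta$ is dominated pointwise. For the reverse inequality, I will exhibit a maximizer directly:
\begin{equation*}
\eta^*(x) = \pi(x)\,\eta_L(x) + \{1 - \pi(x)\}\,\eta_U(x).
\end{equation*}
It is measurable because $\pi$, $\eta_L$ and $\eta_U$ are. It satisfies $\eta_L \preceq \eta^* \preceq \eta_U$, using $\eta_L \le \eta_U$ from the sharp bounds \eqref{eq:eta-bounds}. It attains the pointwise supremum at every $x$, so $R(\pi;\eta^*)$ equals the upper bound and the supremum is achieved.

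A subtlety is that $\eta$ should be a genuine conditional probability of strict benefit for some joint law of $(Y_1, Y_0)$ compatible with the observed marginals. By sharpness of the bounds of \cite{lu2018treatment}, and a measurable-selection/convexity argument (the set of achievable values at each $x$ is the full interval), every measurable $\eta$ between the bounds is attainable. The paper's definition of $R_{\sup}$ only requires $\eta_L \preceq \eta \preceq \eta_U$, so this point can be noted without dwelling on it.

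Finally, identification follows from Assumption~\ref{assump:unconf-pos}: the quantities $m_{0,k}(a,X) = \prob(Y = k \mid A = a, X)$ are identified, hence $\eta_L$, $\eta_U$, and therefore $\psi$, are identified as well.
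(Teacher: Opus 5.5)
Your proof is correct and matches the paper's argument. You maximize the conditional regret pointwise, take the endpoint selection $\eta^*(X)=\pi(X)\eta_L(X)+\{1-\pi(X)\}\eta_U(X)$, note it is measurable and admissible and attains every pointwise maximum at once (so supremum and expectation interchange), and read off $C=\EP[\{u^b(X)-u^n(X)\}\psi_U(X)]$; your side remarks on finiteness of $C$ and on attainability of intermediate $\eta$ go beyond what the paper states but are not needed under its definition of $R_{\sup}$.
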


Proposition~\ref{prop:max-regret} suggests that applying heterogeneous utilities for different subgroups centers the bounds of $ \eta (X) $ at the threshold $ C_u (X) $ in $ \psi (X) $, which is a distinct feature of the expected worst-case utility loss $ R_{\sup} (\pi) $ in our framework, compared to existing literature in policy learning \citep{pu2021estimating, cui2025policy}.

\subsection{Minimax regret policy}\label{sec:minimax-policy}

Building on Proposition~\ref{prop:max-regret}, we now develop a minimax regret approach for policy learning under partial identification.
The goal is to find an optimal policy $\pi^*$ within a given policy class $\Pi$ that minimizes the worst-case expected utility loss relative to the oracle policy $\pi^{\oracle}$.
Formally, the minimax regret policy is defined as
\begin{equation}\label{eq:population-problem}
\pi^* \in \mathop{\arg \min}\limits_{\pi \in \Pi} ~R_{\sup} (\pi) = \mathop{\arg \min}\limits_{\pi \in \Pi} \sup_{\eta: ~\eta_L \preceq \eta \preceq \eta_U} R (\pi; \eta).
\end{equation}
We do not require $ \pi^* $ to be unique in $ \Pi $ since the performance of any policy $ \pi $ is evaluated via $ R_{\sup} (\pi) $ throughout our work.
Minimax regret formulations \eqref{eq:population-problem} have been studied for treatment choice under sampling uncertainty and partial identification \citep{manski2007minimax, stoye2009minimax, kallus2021minimax, ben2024policy, dadamo2021orthogonal}.
Our formulation considers ordinal outcomes and utilities that differ between benefited and non-benefited groups, and the probability of strict benefit is generally not point identified even in randomized trials.

When the policy class $\Pi$ is unrestricted, Proposition~\ref{prop:max-regret} yields the minimax policy $ \pi^* $ as the solution to problem \eqref{eq:population-problem}:
\begin{equation}\label{eq:minimax-policy}
\pi^* (X) = \indicator \{ C_u (X) < 0 \} + \indicator \{ \psi (X) > 0 \} \cdot \indicator \{ 0 \leq C_u (X) < 1 \}.
\end{equation}
where $ \psi (X) $ is defined in \eqref{eq:score}.
When $u^b(X)>0\geq u^n(X)$, $C_u(X)\in[0,1)$, then \eqref{eq:minimax-policy} reduces to $\pi^*(X)=\indicator\{\psi(X)>0\} = \indicator\{(\eta_L(X)+\eta_U(X))/2>C_u(X)\}$, indicating that the unrestricted minimax policy depends on the midpoint of the partial identification bounds of $\eta$. 
Intuitively, the minimax regret policy prioritizes correct treatment allocation to individuals who explicitly gain benefit from (or suffer loss by) the treatment.
See Figure~\ref{fig:OTR} for a schematic illustration.

\begin{figure}[htbp]
\vspace{0.8\baselineskip}
\centering
\resizebox{\linewidth}{!}{\begin{tikzpicture}[>=stealth]

\begin{scope}[xshift=0cm]
\draw[->] (-2,0) -- (2,0);
\draw (-1.0,0.1) -- (-1.0,-0.1);
\draw (1.4,0.1) -- (1.4,-0.1);
\node[below] at (-1.0,-0.1) {$0=\psi_L(X)$};
\node[below] at (1.4,-0.1) {$\psi_U(X)$};
\node[below=1cm] at (0,0) {(a) $\pi^*(X)=1$};
\end{scope}

\begin{scope}[xshift=6cm]
\draw[->] (-2,0) -- (2,0);
\draw (-1,0.1) -- (-1,-0.1);
\draw (0,0.1) -- (0,-0.1);
\draw (1.2,0.1) -- (1.2,-0.1);
\node[below] at (-1,-0.1) {$\psi_L(X)$};
\node[below] at (0,-0.1) {$0$};
\node[below] at (1,-0.1) {$\psi_U(X)$};
\node[below=1cm] at (0,0)
  {(b) $\pi^*(X)=\indicator\{\psi(X)>0\}$};
\end{scope}

\begin{scope}[xshift=12cm]
\draw[->] (-2,0) -- (2,0);
\draw (-1.4,0.1) -- (-1.4,-0.1);
\draw (1.0,0.1) -- (1.0,-0.1);
\node[below] at (-1.4,-0.1) {$\psi_L(X)$};
\node[below] at (1.0,-0.1) {$\psi_U(X)=0$};
\node[below=1cm] at (0,0) {(c) $\pi^*(X)=0$};
\end{scope}

\end{tikzpicture}}
\caption{Illustration of the optimal policy under partial identification.}
\label{fig:OTR}
\end{figure}

\section{Learning a Policy via Orthogonal Smoothing}\label{sec:orthog-smooth}

{Learning the minimax regret policy in \eqref{eq:population-problem} requires estimating the score $\psi(X)$, whose nested maximum and minimum operations can induce plug-in bias and prevent direct application of standard semiparametric debiasing methods \citep{chernozhukov2013intersection, chernozhukov2018double}. In this section, we address these difficulties by smoothing the score and adding a Neyman orthogonal correction to remove the first-order effect of nuisance estimation \citep{levis2025covariate, whitehouse2025inference}.}

For \(a\in\{0,1\}\) and \(j\in\mathcal J\), let \(m_j(a,x)\) and \(e(x)\) denote working models for the true conditional outcome probability \(m_{0,j}(a,x)\) and propensity score \(e_0(x) = \mathbb{P} (A = 1 \mid X=x)\), respectively. Write \(m(a,x)=\{m_j(a,x):j\in\mathcal J\}\), and define \(e(a,x)=ae(x)+(1-a)\{1-e(x)\}\). Let \(\theta=(m,e)\), with true value \(\theta_0=(m_0,e_0)\), and let \(\widehat\theta=(\widehat m,\widehat e)\) denote its estimator.

\subsection{Smooth approximation to the worst-case regret}\label{sec:smooth-approx}

For a vector $g(x)=(g_1(x),\ldots,g_K(x))$, we approximate $G(x)=\max_{1\leq k\leq K}g_k(x)$ by
\begin{equation*}
G^{\beta_n} (g(x)):= \frac{ \sum_{k = 1}^K g_k (x) \exp(\beta_n g_k (x)) }{ \sum_{i = 1}^K \exp(\beta_n g_i (x)) },
\end{equation*}
where $\beta_n>0$ is a smoothing parameter that may depend on the sample size $n$. As $\beta_n\to\infty$, we have $G^{\beta_n}(g(x))\to G(x)$.
Properties of $ G^{\beta_n} $ are shown in Section~\ref{app:sm-property} of the Supplementary Material.

Let \(\eta_L(x;m)\) and \(\eta_U(x;m)\) denote the bound functionals in \eqref{eq:eta-bounds} with \(m_{0,k}\) replaced by \(m_k\). For \(0\leq j\leq J-1\), define
\begin{equation*}
\begin{aligned}
\delta_{L,j}(x;m)&:=\sum_{k=j}^{J-1}m_k(1,x)-\sum_{k=j}^{J-1}m_k(0,x)-C_u(x),\\
\delta_{U,j}(x;m)&:=1+\sum_{k=j+1}^{J-1}m_k(1,x)-\sum_{k=j}^{J-1}m_k(0,x)-C_u(x).
\end{aligned}
\end{equation*}
Let $\delta_L(x;m)$ and $\delta_U(x;m)$ denote the corresponding $J$-dimensional vectors. Then $\eta_L(x;m)-C_u(x)=\max_{j\in\mathcal J}\delta_{L,j}(x;m)$ and $\eta_U(x;m)-C_u(x)=\min_{j\in\mathcal J}\delta_{U,j}(x;m)$.
The smoothed upper and lower contributions in $ \psi (x) $ defined in \eqref{eq:score} are 
\begin{equation}\label{eq:softmax-approx}
	\begin{aligned}
		\psi_L^{\beta_n} (x;m) &= G^{\beta_n}(\delta_L(x;m)) - G^{\beta_n}(\{\delta_L(x;m),0\}),\\
		\psi_U^{\beta_n} (x;m) &= G^{\beta_n}(\{-\delta_U(x;m),0\}) - G^{\beta_n}(-\delta_U(x;m)),
	\end{aligned}
\end{equation}
where the superscript $ \beta_n $ denotes the smoothed versions of corresponding functions, and $(\nu,0)$ denotes the vector $\nu$ appended with a zero.
Then the smoothed version of $ \psi (x) $ is $\psi^{\beta_n} (x;m) = \psi_L^{\beta_n} (x;m) + \psi_U^{\beta_n} (x;m)$.

Because the constant \(C\) in Proposition~\ref{prop:max-regret} is independent of \(\pi\), hereafter we omit this common additive term from both \(R_{\sup}\) and its smoothed counterpart, without affecting policy optimization. Accordingly, define
$$ R_{\sup}^{\beta_n} (\pi; \theta_0) := - \EP \left[ \left\{ u^b (X) - u^n (X) \right\} \pi(X) \psi^{\beta_n} (X) \right].$$ 
Hereafter, we explicitly indicate dependence on nuisance functions by writing key quantities as functions of $ \theta $; e.g., $ R_{\sup}^{\beta_n} (\pi) = R_{\sup}^{\beta_n} (\pi; \theta_0) $.
The following proposition shows that for an arbitrary policy $\pi$, the approximation error can be controlled by the growth rate of smoothing parameters $\{ \beta_n \}_{n \geq 1}$.

\begin{proposition}\label{prop:approx-error}
Let $\{ \beta_n \}_{n \geq 1}$ be a sequence of positive smoothing parameters.
Given a policy $ \pi \in \Pi $, the approximation error between the smoothed and original worst-case regret satisfies
\begin{equation*}
R_{\sup}^{\beta_n} (\pi; \theta_0)- R_{\sup} (\pi; \theta_0) = O (\beta_n^{-1}).
\end{equation*}
\end{proposition}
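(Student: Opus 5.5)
The plan is to reduce the claim to a uniform pointwise bound on the softmax approximation error and then integrate it against the bounded weight $\{u^b(X)-u^n(X)\}\pi(X)$. After dropping the common constant $C$, Proposition~\ref{prop:max-regret} gives
\begin{equation*}
R_{\sup}^{\beta_n}(\pi;\theta_0)-R_{\sup}(\pi;\theta_0) = -\EP\Big[\{u^b(X)-u^n(X)\}\pi(X)\{\psi^{\beta_n}(X)-\psi(X)\}\Big].
\end{equation*}
Since $u_a^b$ and $u_a^n$ are bounded and $\pi\in\{0,1\}$, it suffices to show $\sup_x|\psi^{\beta_n}(x)-\psi(x)|\le c\,\beta_n^{-1}$ for a constant $c$ depending only on $J$. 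The bound is then uniform in $\pi$.

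The key deterministic fact is that for any $g\in\mathbb{R}^K$,
\begin{equation*}
0\le \max_k g_k - G^{\beta}(g)\le \frac{K-1}{e\,\beta}.
\end{equation*}
I will prove it directly rather than rely on the supplementary properties. Write $M=\max_k g_k$, $d_k=M-g_k\ge 0$ and $w_k=\exp(-\beta d_k)/\sum_i\exp(-\beta d_i)$. Then $M-G^\beta(g)=\sum_k w_k d_k$. Because the maximizing index contributes $1$ to the denominator, this is at most $\sum_{k:d_k>0} d_k e^{-\beta d_k}$. Using $\sup_{t\ge 0} t e^{-\beta t}=1/(e\beta)$ and the fact that at most $K-1$ indices have $d_k>0$ gives the bound.

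Next I rewrite each nonsmooth piece of $\psi$ as a difference of maxima so that it matches \eqref{eq:softmax-approx}. For $\psi_L$, I will use
\begin{equation*}
\min\{\eta_L-C_u,0\} = \max_j\delta_{L,j} - \max\{\delta_{L,0},\dots,\delta_{L,J-1},0\},
\end{equation*}
which holds since $\min\{a,0\}=a-\max\{a,0\}$ and $\max\{\max_j\delta_{L,j},0\}$ is the maximum of the augmented vector. For $\psi_U$, with $\eta_U-C_u=\min_j\delta_{U,j}=-\max_j(-\delta_{U,j})$, I will use
\begin{equation*}
\max\{\eta_U-C_u,0\} = \max\{-\delta_U,0\}-\max_j(-\delta_{U,j}),
\end{equation*}
which follows from $\max\{a,0\}=\max\{-a,0\}+a$ applied to $a=\eta_U-C_u$, together with $\max\{-\delta_U,0\}=\max\{-\min_j\delta_{U,j},0\}$. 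Each of $\psi_L^{\beta_n}$ and $\psi_U^{\beta_n}$ is then the same difference with every max replaced by $G^{\beta_n}$. The triangle inequality and the lemma, applied with $K=J$ and $K=J+1$, give
\begin{equation*}
|\psi^{\beta_n}(x)-\psi(x)|\le \frac{2(J-1)+2J}{e\,\beta_n}
\end{equation*}
uniformly in $x$. Multiplying by $\sup_x|u^b-u^n|<\infty$ and taking expectations yields the $O(\beta_n^{-1})$ rate.

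The main point to check carefully is the algebraic step that matches \eqref{eq:softmax-approx} to $\psi_L$ and $\psi_U$, particularly the sign conventions for the augmented vector $\{-\delta_U,0\}$. No step is analytically hard, and the constant need not be sharp. The only assumption used beyond the lemma is boundedness of the utilities, which the paper imposes.
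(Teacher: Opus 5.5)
Your proof is correct and follows the paper's argument: you make the same reduction to $-\EP[\{u^b(X)-u^n(X)\}\pi(X)\{\psi^{\beta_n}(X)-\psi(X)\}]$, use the same four-term decomposition of $\psi^{\beta_n}-\psi$ into softmax approximation errors, and apply a uniform pointwise bound to each error, where your explicit check of the difference-of-maxima identities supplies the step the paper simply asserts. The only difference is that you bound each error by $(K-1)/(e\beta_n)$ via $\sum_k d_k e^{-\beta_n d_k}$ (the same first step used in the proof of Lemma~\ref{lem:approx-error}) rather than by the entropy bound $\log K/\beta_n$ from Lemma~\ref{lem:sm-property}(i), and this changes only the constant.
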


{Proposition~\ref{prop:approx-error} shows that the smoothing approximation error vanishes as $\beta_n\to\infty$. However, the smoothed criterion still depends on estimated nuisance functions, and increasing $\beta_n$ can magnify the second-order outcome model remainder. We next construct an orthogonal score to remove first-order nuisance bias; Section~\ref{sec:reg-converg} then gives rate conditions balancing the remaining nuisance error against smoothing approximation error.}

\subsection{Neyman orthogonality for robust estimation}\label{sec:orthog}

The smoothed worst-case regret $ R_{\sup}^{\beta_n} (\pi; \theta_0) $ constructed in Section \ref{sec:smooth-approx} depends on the true conditional outcome probabilities $ m_{0} $, which are generally unknown and must be estimated from the observed data.
A Neyman orthogonal score recovers the target in expectation at $\theta_0$ and has zero first derivative with respect to the nuisance functions \citep{chernozhukov2018double}. Under suitable smoothness conditions, its nuisance remainder is therefore second-order, with a magnitude that can depend on $\beta_n$, as shown later in Section~\ref{sec:reg-converg}. The expansion and formal definition are given in Section~\ref{app:orthog-background} of the Supplementary Material.

{To construct the Neyman orthogonal score, define the influence function correction term \citep{chernozhukov2018double}}
\begin{equation}\label{eq:dr-correction}
T_{a, j} (O; \theta) = \frac{ \indicator \{ A = a \} }{ e (a, X) } \left\{ \indicator \{ Y = j \} - m_j (a, X) \right\}.
\end{equation}
{Aggregating these residuals according to the lower and upper bound formulas in \eqref{eq:eta-bounds} gives}
\begin{equation*}
\begin{aligned}
L_j (O; \theta) &= \sum_{k = j}^{J - 1} T_{1, k} (O; \theta) - \sum_{k = j}^{J - 1} T_{0, k} (O; \theta), \\ U_j (O; \theta) &= \sum_{k = j + 1}^{J - 1} T_{1, k} (O; \theta) - \sum_{k = j}^{J - 1} T_{0, k} (O; \theta).
\end{aligned}
\end{equation*}

\begin{proposition}\label{prop:orthogonal-score}
For any $\beta_n > 0$, the Neyman orthogonal score of $ R_{\sup}^{\beta_n} (\pi; \theta) $ is given by
\begin{equation}\label{eq:NO-score}
\Psi^{\beta_n} (O, \pi; \theta) = -\{u^b(X)-u^n(X)\} \pi (X) \left\{ \psi^{\beta_n} (X; m) + \phi^{\beta_n} (O; \theta) \right\},
\end{equation}
where $ \psi^{\beta_n} (X; m) $ is defined in \eqref{eq:softmax-approx}, and $ \phi^{\beta_n} (O; \theta) $ is the first-order bias correction term:
\begin{equation*}
\phi^{\beta_n} (O; \theta) = \sum_{j = 0}^{J - 1} \frac{ \partial \psi_L^{\beta_n} (X; m) }{ \partial \delta_{L, j} } L_j (O; \theta) + \sum_{j = 0}^{J - 1} \frac{ \partial \psi_U^{\beta_n} (X; m) }{ \partial \delta_{U, j} } U_j (O; \theta).
\end{equation*}
\end{proposition}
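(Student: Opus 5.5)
The plan is to verify the two defining properties of a Neyman orthogonal score directly. The first is unbiasedness at the truth: $\EP[\Psi^{\beta_n}(O,\pi;\theta_0)] = R_{\sup}^{\beta_n}(\pi;\theta_0)$. The second is that the Gateaux derivative of $\theta\mapsto \EP[\Psi^{\beta_n}(O,\pi;\theta)]$ at $\theta_0$ vanishes in every direction $\theta-\theta_0=(m-m_0,e-e_0)$.

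For unbiasedness, it suffices to show $\EP[T_{a,j}(O;\theta_0)\mid X]=0$. By Assumption~\ref{assump:unconf-pos},
\[
\EP\bigl[\indicator\{A=a\}\indicator\{Y=j\}\mid X\bigr]=e_0(a,X)\,m_{0,j}(a,X),
\]
so the residual has conditional mean zero. Hence $L_j(O;\theta_0)$ and $U_j(O;\theta_0)$ also have conditional mean zero given $X$. Since the partial-derivative weights $\partial\psi_L^{\beta_n}/\partial\delta_{L,j}$ and $\partial\psi_U^{\beta_n}/\partial\delta_{U,j}$, along with $\pi(X)$ and $u^b-u^n$, are all functions of $X$ alone, iterated expectations give $\EP[\phi^{\beta_n}(O;\theta_0)\mid X]=0$. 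The claim follows.

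For orthogonality, I will set $\theta_r=\theta_0+r(\theta-\theta_0)$ and differentiate at $r=0$. The expression splits into three pieces.

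\begin{enumerate}[label=(\roman*)]
\item \emph{The plug-in term $\psi^{\beta_n}(X;m_r)$.} The softmax-weighted average $G^{\beta_n}$ is $C^\infty$, and $\delta_{L,j},\delta_{U,j}$ are linear in $m$. The chain rule then gives
\[
\sum_j \frac{\partial\psi_L^{\beta_n}}{\partial\delta_{L,j}}\,\Delta_{L,j}(X)+\sum_j \frac{\partial\psi_U^{\beta_n}}{\partial\delta_{U,j}}\,\Delta_{U,j}(X),
\]
where $\Delta_{L,j}=\sum_{k\ge j}\{(m_k-m_{0,k})(1,X)-(m_k-m_{0,k})(0,X)\}$ and $\Delta_{U,j}$ is defined analogously with the shifted index in the treated sum.
\item \emph{The correction term differentiated through $m$ inside $T_{a,j}$.} Using $\EP[\indicator\{A=a\}/e_0(a,X)\mid X]=1$, this piece produces exactly $-\Delta_{L,j}$ and $-\Delta_{U,j}$ with the same weights, so it cancels (i).
\item \emph{Everything else.} This covers differentiation of the weights $\partial\psi^{\beta_n}/\partial\delta$ (through $m$) and of $1/e(a,X)$ (through $e$). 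In each case the derivative multiplies a residual $L_j(O;\theta_0)$ or $U_j(O;\theta_0)$, or $T_{a,j}(O;\theta_0)$ itself, whose conditional mean given $X$ is zero. So these terms vanish after iterated expectations.
\end{enumerate}

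The main care point is the sign convention for the derivatives in $\psi_U^{\beta_n}$. There $G^{\beta_n}$ is applied to $-\delta_U$, so the chain rule introduces a minus sign, and I must check that $\partial\psi_U^{\beta_n}/\partial\delta_{U,j}$ as written already absorbs it, so that the weights in (i) and (ii) match term by term. A secondary point is justifying the interchange of differentiation and expectation. For that I will assume $e$ and $e_0$ are bounded away from zero and note that the softmax gradients are bounded by $1$ in absolute value, so dominated convergence applies for each fixed $\beta_n$.
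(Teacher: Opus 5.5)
Your proposal is correct and follows essentially the same route as the paper's proof: unbiasedness from $\EP[T_{a,j}(O;\theta_0)\mid X]=0$, then a chain-rule computation in which the $m$-derivative of the correction cancels the plug-in derivative, while the remaining derivative terms (of the gradient weights, and through $e$) each multiply a residual with conditional mean zero given $X$. One small correction for the dominated-convergence step: the gradient of $G^{\beta_n}$ is not bounded by $1$, since a component can exceed $1$ and Lemma~\ref{lem:sm-property}(ii) only gives the bound $1+\log K$; also, differentiating the weights requires the Hessian bound in Lemma~\ref{lem:sm-property}(iv), but any such bound that is uniform in $g$ for fixed $\beta_n$ suffices for your argument.
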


Proposition~\ref{prop:orthogonal-score} shows that the smoothed score $\psi^{\beta_n}$ can be augmented by the correction term $\phi^{\beta_n}$ to remove its first-order sensitivity to nuisance estimation. 
Write $R_{\sup}^{\beta_n,\no}(\pi;\theta):=\EP[\Psi^{\beta_n}(O,\pi;\theta)]$ for the population criterion induced by this score. At $\theta=\theta_0$, $R_{\sup}^{\beta_n,\no}(\pi;\theta_0)=R_{\sup}^{\beta_n}(\pi;\theta_0)$.
By invoking Proposition~\ref{prop:approx-error}, we conclude that given policy $ \pi \in \Pi $, $R_{\sup}^{\beta_n, \no}(\pi; \theta_0)$ converges to the true worst-case regret $R_{\sup}(\pi; \theta_0)$ as $\beta_n \to \infty$.

\subsection{Estimation of the minimax regret policy}

{Let $\Pi_n$ be a restricted policy class that may depend on $n$, and let $\pi_n^*\in\arg\min_{\pi\in\Pi_n}R_{\sup}(\pi)$ be its minimax regret policy. Given $\beta_n>0$, we obtain the orthogonal smoothed estimator $\hat\pi_n$ as follows.}
{The procedure can also be implemented using cross-fitting \citep{chernozhukov2018double}.}

{
\begin{enumerate}[label=\textit{Step \arabic*.}, leftmargin=*]
\item
Estimate the propensity score and conditional outcome probabilities on a sample independent of the observations $\{O_i\}_{i=1}^n$, obtaining $\hat\theta=(\hat m,\hat e)$.

\item
Evaluate the score in \eqref{eq:NO-score} using $\hat\theta$ and form the empirical criterion
\begin{equation*}
\hat R_{\sup}^{\beta_n,\no}(\pi;\hat\theta)
=\Pn\Psi^{\beta_n}(O,\pi;\hat\theta)
=\frac{1}{n}\sum_{i=1}^n\Psi^{\beta_n}(O_i,\pi;\hat\theta),
\end{equation*}
where $\Pn f=n^{-1}\sum_{i=1}^n f(O_i)$ denotes the empirical average.

\item
Minimize this criterion over $\Pi_n$ and obtain
\begin{equation*}
\hat\pi_n\in\arg\min_{\pi\in\Pi_n}\hat R_{\sup}^{\beta_n,\no}(\pi;\hat\theta).
\end{equation*}
\end{enumerate}
}

\section{Statistical guarantees for the estimated policy}\label{sec:reg-converg}

This section establishes asymptotic convergence guarantees for $ \hat{\pi}_n $ to $\pi^*_n$ \eqref{eq:population-problem} in terms of excess risk by analyzing second-order bias and identifying growth rates of $\beta_n$ sufficient for root-$n$ consistency of regret estimation.
To control the statistical error from optimizing over $\Pi_n$, we impose the following complexity restriction.
\begin{assumption}\label{assump:class-complexity}
There exist constants $ \zeta_\Pi \in (0, 1/2) $ and $ N \geq 1 $ such that the Vapnik-Chervonenkis dimension of $ \Pi_n $ is bounded by $ \vc (\Pi_n) \leq n^{\zeta_\Pi} $ for $ n \geq N $.
\end{assumption}
Assumption~\ref{assump:class-complexity} allows the policy class complexity to grow moderately with the sample size $n$.
Common policy classes, including linear rules and decision trees can satisfy this condition when their dimensions, depths, or architectures are appropriately controlled \citep{athey2021policy,zhou2023offline}.

For nonsmooth functionals, regular estimators do not exist without further assumptions \citep{hirano2012impossibility}.
In particular, when the maximizer or minimizer is nonunique with positive probability, the resulting functional may lack pathwise differentiability, complicating standard regular asymptotically linear estimation.
To address this challenge, we impose the following margin condition, which controls the probability mass near the optimality boundary \citep{chen2023inference, luedtke2016statistical}.

\begin{assumption}\label{assump:margin-cond}
There exist constants $\delta>0$ and $C_0>0$ such that
\[
	\mathbb P\left(v_{(1)}(X)-v_{(2)}(X)\leq t\right)\leq C_0t^\delta,
\]
for every vector $v\in\{\delta_L,(\delta_L,0),-\delta_U,(-\delta_U,0)\}$ and every $t\geq0$, where \(v_{(1)}\geq v_{(2)}\) are the largest and second-largest components of $v$.
\end{assumption}

Assumption~\ref{assump:margin-cond} imposes a margin condition on the probability of a near-tie to decay at least as fast as $t^\delta$ as $t$ approaches $0$, with larger $\delta$ imposing stronger separation near the optimality boundary.
In fact, setting $t = 0$ for any $\delta > 0$ implies that the maximizer of each vector is almost surely unique, which avoids issues that arise when ties occur with positive probability \citep{luedtke2016statistical}.
The following theorem shows how policy class complexity, smoothing approximation, and nuisance estimation jointly determine the excess worst-case regret of the learned policy. 

\begin{theorem}\label{cor:regret-converge}
{
Suppose Assumptions~\ref{assump:unconf-pos}--\ref{assump:margin-cond} hold and $\beta_n\to\infty$. Let $\hat e$ and $\hat m_j$, $j\in\mathcal J$, be estimated on an independent sample and satisfy
\[
\|\hat e-e_0\|=o_{\mathbb P}(1),\qquad
\|\hat m_j-m_{0,j}\|=o_{\mathbb P}(1),\quad j\in\mathcal J.
\]
Suppose $\|\hat m(A,X)\|_{L_\infty(\mathbb P)}<B$ for some constant $B>0$, and $\mathbb P\{\epsilon\leq\hat e(X)\leq1-\epsilon\}=1$ for some $\epsilon>0$. Then the excess worst-case regret of $\hat\pi_n$ relative to $\pi_n^*$ satisfies}
\begin{equation}\label{eq:cor1}
\begin{aligned}
& R_{\sup} (\hat{\pi}_n; \theta_0) - R_{\sup} (\pi_n^*; \theta_0)\\ =\ & {O_{\mathbb{P}} \left( \sqrt{\vc(\Pi_n) / n} \right)}+ {O_{\mathbb{P}} \left( || \hat{e} - e_0 || \cdot \max_{0 \leq j \leq J - 1} || \hat{m}_{j} - m_{0, j} || \right)} \\
&+ {O_{\mathbb{P}} \left( \beta_n^{- (1 + \delta)} \right)} + {O_{\mathbb{P}} \left( \beta_n \max_{0 \leq j \leq J - 1} || \hat{m}_{j} - m_{0, j} ||^2 \right)} + o_{\mathbb{P}} (n^{-1 / 2}).
\end{aligned}
\end{equation}
\end{theorem}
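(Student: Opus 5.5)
The proof follows the standard excess-risk decomposition for empirical risk minimization with estimated nuisances. Write $R(\pi)=R_{\sup}(\pi;\theta_0)$, $R^{\beta}(\pi)=R_{\sup}^{\beta_n}(\pi;\theta_0)$ and $\hat R(\pi)=\hat R_{\sup}^{\beta_n,\no}(\pi;\hat\theta)$. Recall that $R^{\beta_n,\no}_{\sup}(\pi;\theta)=\EP\Psi^{\beta_n}(O,\pi;\theta)$, so $R^{\beta_n,\no}_{\sup}(\pi;\theta_0)=R^{\beta}(\pi)$. Since $\hat\pi_n$ minimizes $\hat R$ over $\Pi_n$,
\begin{equation*}
R(\hat\pi_n)-R(\pi_n^*)\le \{R(\hat\pi_n)-R^{\beta}(\hat\pi_n)\}-\{R(\pi_n^*)-R^{\beta}(\pi_n^*)\}+2\sup_{\pi\in\Pi_n}\big|R^{\beta}(\pi)-\hat R(\pi)\big|.
\end{equation*}
The last supremum splits into an empirical-process term $\sup_\pi|(\Pn-\EP)\Psi^{\beta_n}(\cdot,\pi;\hat\theta)|$ and a bias term $\sup_\pi|R^{\beta_n,\no}_{\sup}(\pi;\hat\theta)-R^{\beta_n,\no}_{\sup}(\pi;\theta_0)|$.

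\textbf{Smoothing error.} The first two brackets are a difference of smoothing errors at two fixed policies. Proposition~\ref{prop:approx-error} only gives $O(\beta_n^{-1})$, so I would sharpen it with Assumption~\ref{assump:margin-cond}. For each vector $v$, the pointwise softmax error $|G^{\beta_n}(v)-\max_k v_k|$ is bounded in two ways:
\begin{itemize}
\item by $C\beta_n^{-1}$ always;
\item by roughly $K\,\Delta\, e^{-\beta_n\Delta}$ when the gap between the top two components is $\Delta=v_{(1)}-v_{(2)}$. This holds because the weights on non-maximal components are at most $e^{-\beta_n\Delta}$, and the error contributed by a component with gap $g$ is $g e^{-\beta_n g}$.
\end{itemize}
Integrating against the margin tail $\prob(\Delta\le t)\le C_0t^\delta$ with a layer-cake argument, $\EP|G^{\beta_n}(v)-\max v|\lesssim \int \beta_n^{-1}\min(1,\beta_n t e^{-\beta_n t})\,d\prob(\Delta\le t)=O(\beta_n^{-(1+\delta)})$. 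Apply this to each of the four vectors. Since $u^b-u^n$ and $\pi$ are bounded, this yields the $O(\beta_n^{-(1+\delta)})$ term uniformly in $\pi$.

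\textbf{Empirical-process term.} Conditional on the independent nuisance sample, $\hat\theta$ is fixed. The score takes the form $\pi(X)h(O)$, where $h$ does not depend on $\pi$. Moreover, $h$ is bounded: $\hat e\in[\epsilon,1-\epsilon]$, $\hat m$ is bounded by $B$, the softmax derivatives are bounded by constants independent of $\beta_n$ (they are convex-combination weights plus bounded terms of the form $\beta_n(g_k-G)w_k$), and $\psi^{\beta_n}$ is bounded. Because $\{\pi h:\pi\in\Pi_n\}$ inherits the VC dimension of $\Pi_n$, a standard maximal inequality (symmetrization plus Dudley entropy integral) gives $\sup_\pi|(\Pn-\EP)\Psi|=O_{\prob}(\sqrt{\vc(\Pi_n)/n})$. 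Any centered remainder, such as the difference between $\hat\theta$ and $\theta_0$ scores, is $o_{\prob}(n^{-1/2})$ by consistency of $\hat\theta$; I would absorb this into the $o_{\prob}(n^{-1/2})$ term. Verifying the uniform bound on $\beta_n$-dependent derivatives is a step I would check carefully.

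\textbf{Second-order nuisance bias (main obstacle).} Compute $\EP[\Psi^{\beta_n}(O,\pi;\hat\theta)-\Psi^{\beta_n}(O,\pi;\theta_0)]$ by first conditioning on $X$. For $T_{a,j}$ this gives $\EP[T_{a,j}(O;\hat\theta)\mid X]=\frac{e_0(a,X)}{\hat e(a,X)}(m_{0,j}-\hat m_j)(a,X)$. The bias then equals
\begin{equation*}
-\EP\big[(u^b-u^n)\pi\{\psi^{\beta_n}(\hat m)-\psi^{\beta_n}(m_0)+\nabla\psi^{\beta_n}(\hat m)\cdot(m_0-\hat m)\}\big]
\end{equation*}
plus a cross term of the form $\nabla\psi^{\beta_n}(\hat m)\cdot(\hat e^{-1}e_0-1)(m_0-\hat m)$. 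By Cauchy--Schwarz, with bounded gradients and $\hat e$ bounded away from zero, the cross term is $O(\|\hat e-e_0\|\max_j\|\hat m_j-m_{0,j}\|)$. The first bracket is a first-order Taylor remainder. The Hessian of the softmax-weighted average $G^{\beta_n}$ has operator norm $O(\beta_n)$, because differentiating the weights $w_k\propto e^{\beta_n g_k}$ produces a factor $\beta_n$. The remainder is therefore bounded by $C\beta_n\|\hat m-m_0\|^2$, which after integration gives $O(\beta_n\max_j\|\hat m_j-m_{0,j}\|^2)$. The delicate part is confirming that the Hessian bound is uniform in $X$, including terms like $\beta_n(g_k-G)$ times a further $\beta_n$ factor. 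I would control these using $\sup_t t e^{-t}<\infty$ applied to the weights.

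Collecting the smoothing error, the empirical-process term and the nuisance bias gives \eqref{eq:cor1}.
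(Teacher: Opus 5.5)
Your proposal is correct and follows essentially the same route as the paper: the ERM inequality reducing excess regret to a uniform estimation error, a conditional VC maximal inequality for the class $\{\pi h:\pi\in\Pi_n\}$ with $h$ bounded uniformly in $\beta_n$, the margin-based $O(\beta_n^{-(1+\delta)})$ smoothing bound (the paper's Lemma~\ref{lem:approx-error}), and the Term (III) expansion with an $O(\beta_n)$ Hessian remainder plus a propensity cross term.

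One small slip needs correcting. Since $g e^{-\beta_n g}$ is increasing on $(0,1/\beta_n)$, your pointwise bound $K\Delta e^{-\beta_n\Delta}$ in terms of the top-two gap only holds when $\Delta\geq 1/\beta_n$. Also, your integrand $\beta_n^{-1}\min(1,\beta_n t e^{-\beta_n t})$ simply equals $t e^{-\beta_n t}$. The fix is to bound the error by $\sum_i\Delta_i e^{-\beta_n\Delta_i}$ and use $\prob(0<\Delta_i\leq t)\leq C_0t^\delta$ for each component separately, as the paper does; this gives the same rate.
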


 Theorem~\ref{cor:regret-converge} gives the bound for the excess regret, which contains four components apart from a negligible remainder. The first term captures sampling variability from optimizing over $\Pi_n$ and reflects the uniform control over the policy class needed to extend the fixed policy analysis to a learned policy, as established in Theorem~\ref{thm:asymp-converg} of the Supplementary Material. The second term is the product of propensity score and outcome model estimation errors. The third is smoothing approximation error, and the fourth is the squared outcome model error multiplied by $\beta_n$. Thus, orthogonalization removes first-order nuisance bias, while smoothing leaves an additional second-order term beyond the usual product of nuisance errors.

{These terms reveal the trade-off in choosing $\beta_n$: increasing it reduces approximation error but magnifies the outcome model remainder. A simpler policy class reduces the sampling term, but does not remove the smoothing and nuisance errors. This separates the role of policy class complexity from the accuracy needed to estimate the policy criterion.}

Next, we give sufficient growth rates of $\beta_n$ for root-$n$ consistency according to Theorem~\ref{cor:regret-converge}. Let $\beta_n\asymp n^{\zeta_\beta}$, $\|\hat m_j-m_{0,j}\|=O_{\mathbb P}(n^{-\zeta_m})$, and $\|\hat e-e_0\|=O_{\mathbb P}(n^{-\zeta_e})$. Here, $a_n\asymp b_n$ means $c b_n\leq a_n\leq C b_n$ for constants $0<c\leq C<\infty$ and all sufficiently large $n$.
To ensure the excess-regret bound is $O_{\mathbb P}(\sqrt{\vc(\Pi_n)/n}+n^{-1/2})$, sufficient rate conditions are
\begin{equation}\label{eq:rate-constraint}
\frac{1}{2(1+\delta)}\leq\zeta_\beta\leq2\zeta_m-\frac12,
\qquad \zeta_e+\zeta_m\geq\frac12,
\end{equation}
where $\delta>0$ is the exponent in Assumption~\ref{assump:margin-cond}. The lower and upper bounds on $\zeta_\beta$ control the third and fourth terms in \eqref{eq:cor1}, respectively, while the final inequality controls the second term. A feasible choice of $\zeta_\beta$ requires $\zeta_m\geq(2+\delta)/\{4(1+\delta)\}>1/4$. By improving the smoothing error from $O(\beta_n^{-1})$ to $O(\beta_n^{-(1+\delta)})$, the margin condition relaxes the outcome model rate requirement relative to the $\zeta_m\geq1/2$ restriction obtained from the general approximation bound, provided the propensity score estimator converges sufficiently fast.



\section{Simulation}\label{sec:sim}
\subsection{Data generating process}

In this section, we evaluate the finite sample performance of the proposed method from two aspects: the worst-case regret $R_{\sup}$ and the sensitivity analysis regarding the utility threshold $C_u$.
We compare the direct plug-in estimator and the proposed orthogonal smoothed estimator with two additional estimators that separate the roles of orthogonalization and smoothing.
The direct influence function estimator augments the unsmoothed plug-in score using the influence function corrections corresponding to the optimal branches of the maximum and minimum operations \citep{levis2025covariate}.
The smoothed plug-in estimator uses the same smooth approximation as the orthogonal smoothed estimator but omits the orthogonal correction term.
Table~\ref{tab:estimators} summarizes the construction of the four estimators in this section.

\begin{table}[htbp]
\renewcommand{\baselinestretch}{1}\small
\centering
\caption{Estimators compared in the simulation of $R_{\sup}$.
IF denotes influence function, and checkmarks indicate the components included in each estimator.}
\label{tab:estimators}
\begin{tabular}{@{}lcc@{}}
\toprule
Estimator & Smoothing & IF correction \\
\midrule
Direct plug-in & $\times$ & $\times$ \\
Direct IF & $\times$ & $\checkmark$ \\
Smoothed plug-in & $\checkmark$ & $\times$ \\
Orthogonal smoothed & $\checkmark$ & $\checkmark$ \\
\bottomrule
\end{tabular}
\end{table}

Our data generating process follows a similar setup to \cite{levis2025covariate}.
Two baseline covariates $ X = (X_1, X_2) $ are generated independently from $ \operatorname{Unif}(-1,1) $, and $ A \mid X \sim \operatorname{Bernoulli}\{e(X)\} $, where $e(X)=\min\{\max(X_1^2,0.1),0.9\}$.
The ordinal outcome is generated from the multinomial logistic model provided in Section~\ref{app:add-sim} of the Supplementary Material, with $ J \in \{3,5,8\} $ outcome levels.
For the primary worst-case regret analysis, we set the utility threshold to a constant $ C_u = 0.35 $, and subsequently vary $C_u$ in the sensitivity analysis.
Equivalently, we normalize \(u^b-u^n=1\), so that \(u^b=1-C_u\) and \(u^n=-C_u\).

Following \cite{kennedy2023towards} and \cite{levis2025covariate}, we construct controlled nuisance estimators by perturbing the true nuisance functions on the logit scale.
For observation $i$ and $j=0,\ldots,J-1$, we set $\hat m_j(a,X_i)=\expit(\logit\{m_j(a,X_i)\}+\epsilon_{m,j,a,i}(n))$ and $\hat e(X_i)=\expit(\logit\{e(X_i)\}+\epsilon_{e,i}(n))$.
The perturbations are mutually independent, with $\epsilon_{m,j,a,i}(n)\sim N((2a-1)hn^{-r},h^2n^{-2r})$ and $\epsilon_{e,i}(n)\sim N(hn^{-r},h^2n^{-2r})$.
We set $ h = 2 $ and vary the error rate over $ r \in \{0.10+0.05k:k=0,\ldots,8\}$.
This construction yields $ \|\hat{e}-e\|=O_{\mathbb{P}}(n^{-r}) $ and $ \|\hat{m}_j-m_j\|=O_{\mathbb{P}}(n^{-r}) $, so that $ \zeta_e=\zeta_m=r $.
As $r$ increases, the convergence rate of nuisance components approaches the parametric rate.

We consider sample sizes $ n \in \{500,1000,5000\} $, and take $\beta_n=2h n^{\zeta_\beta}$, where its growing rate $ \zeta_\beta = \max \{ c_\beta, 2\zeta_m - 0.5 \} $, and we set $c_\beta=0.25$ to ensure that $\beta_n$ diverges throughout the simulated nuisance-rate range, so that the smoothing approximation error vanishes.
Each scenario is replicated 500 times, and independent training and test samples of size $n$ are used for policy learning and evaluation, respectively, in each replication.
We focus on the policy class of depth-2 decision trees
\citep{athey2021policy,ben2024policy,jin2025policy}.

Figure~\ref{fig:rsup-j3} displays average excess regret, absolute bias and root mean squared error (RMSE) for $J=3$ as the error rate $r$ varies. Results for $J=5$ and $J=8$ and corresponding numerical tables are provided in Section~\ref{app:add-sim} of the {Supplementary Material}.
In the simulations, we estimate $R_{\sup}(\widehat\pi_n;\theta_0)-R_{\sup}(\pi_n^*;\theta_0)$ on independent test samples and report its average across replications, where the population minimizer $\pi_n^*$ is approximated by the empirical oracle depth-2 tree fitted using the true training-sample scores. Absolute bias is the absolute value of the mean estimation error $\widehat R_{\sup}(\widehat\pi_n;\widehat\theta)-R_{\sup}(\widehat\pi_n;\theta_0)$ across Monte Carlo replications.
For each learned policy, bias and RMSE are calculated relative to the criterion based on true score evaluated on the same independent test sample, \(-\Pn (\hat\pi_n(X)\psi(X;m_0))\), which provides a Monte Carlo approximation to its population worst-case regret $R_{\sup}(\widehat\pi_n;\theta_0)$.
Numerical results are reported in Table~\ref{tab:rsup-j3} in the {Supplementary Material}.

\subsection{Policy performance and regret estimation}

According to Figure~\ref{fig:rsup-j3}, the direct plug-in and smoothed plug-in estimators exhibit nearly identical excess regret, absolute bias, and RMSE, while the direct influence function and orthogonal smoothed estimators also perform similarly. 
This indicates that the observed performance gains of corrected estimators (i.e., the direct influence function and orthogonal smoothed estimators) mainly come from the orthogonal correction terms, whereas smoothing provides little improvement in either regret estimation accuracy or policy performance under this setting.
Across much of the error rate range, the two corrected estimators achieve substantially lower errors than their uncorrected counterparts (i.e., the plug-in and smoothed plug-in estimators). Absolute bias closely tracks RMSE for all the four methods when nuisance convergence is slow, indicating that systematic bias accounts for most of the regret estimation error in these settings.

\begin{figure}[t]
\centering
\includegraphics[width=0.92\textwidth]{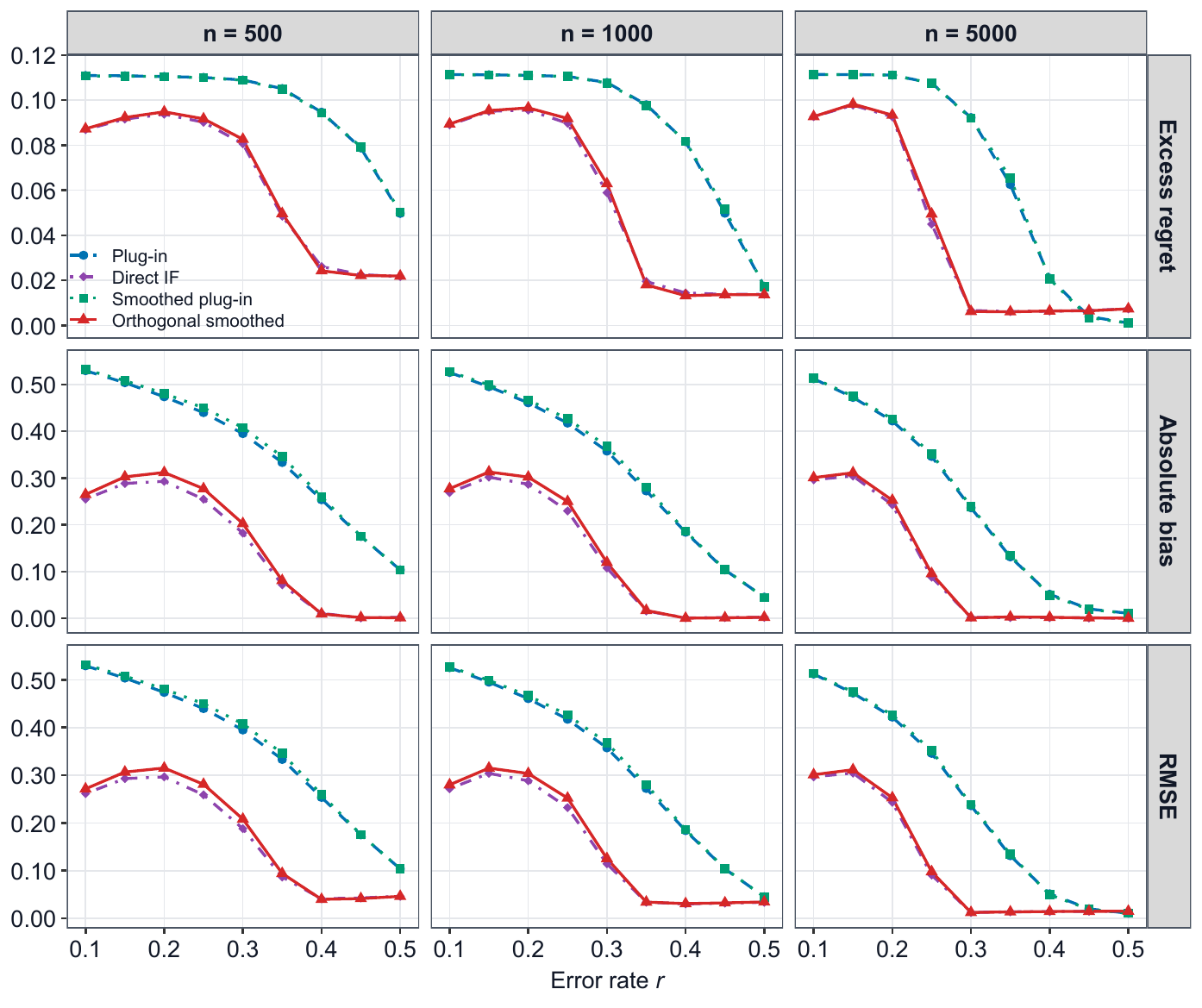}
\caption{Excess worst-case regret, absolute bias, and RMSE of regret estimation under different sample sizes and error rates. Here $J=3$.}
\label{fig:rsup-j3}
\end{figure}

As nuisance convergence becomes faster, that is, when $r$ increases, the uncorrected plug-in estimators generally improve more gradually, whereas both corrected estimators exhibit a pronounced decline in excess regret, absolute bias, and RMSE at intermediate nuisance rates.
This decline occurs at a lower nuisance rate as the sample size increases, appearing around \(r=0.3\) when $n=5000$. 
For the orthogonal smoothed estimator, the reduction in excess regret is consistent with the shrinking nuisance error contributions in the second and fourth terms of \eqref{eq:cor1} in Theorem~\ref{cor:regret-converge}.

At sufficiently fast nuisance convergence rates (e.g., $r>0.3$), the performance gap narrows as the sample size grows. When nuisance estimators converge at the parametric rate, both the direct and smoothed plug-in estimators become competitive and attain lower excess regret and RMSE when $n=5000$. Although the corrected estimators retain smaller absolute bias, their greater variability can outweigh this advantage once plug-in bias is sufficiently small. These results highlight the value of the influence function correction when nuisance error remains appreciable, while showing that more accurate nuisance estimation and larger samples can make uncorrected plug-in estimators competitive.

\subsection{Sensitivity to the utility threshold}
\label{sec:utility-sensitivity}

Figure~\ref{fig:trpr-n1000} presents a sensitivity analysis of treatment allocation to the utility threshold $C_u$, reporting the mean treated proportion as \(C_u\) varies, with sample size \(n=1000\) and nuisance convergence rate \(r=0.4\).
As in Figure~\ref{fig:rsup-j3}, the direct and smoothed plug-in estimators closely overlap, and the direct influence function and orthogonal smoothed estimators exhibit the same pattern.
All four methods approach the always-treat policy when $C_u$ is close to zero and always-control when $C_u$ is close to one, reflecting the increasing penalty for treating individuals who do not benefit relative to the gain from treating those who do.
The main transition shifts toward larger values of $C_u$ as $J$ increases.
This pattern is consistent with the unrestricted minimax rule, which assigns treatment when $C_u$ lies below the midpoint of the bounds on the probability of strict benefit, since this midpoint increases on average over the covariate distribution as \(J\) increases under the present data generating process, primarily through an increase in the upper bound.
Treatment therefore remains preferable under the minimax criterion at larger relative utility losses.
The corrected methods generally begin reducing treatment at lower thresholds and exhibit a more gradual transition than the plug-in methods.
In practice, researchers should consider a plausible range for the relative utility loss and gain and examine treatment recommendations throughout that range, paying particular attention to transition regions where modest changes in preferences can substantially alter treatment allocation.
Corresponding results for $n=500$ and $n=5000$ are provided in Section~\ref{app:add-sim} of the Supplementary Material.

\begin{figure}[htbp]
\centering
\includegraphics[width=0.96\textwidth]{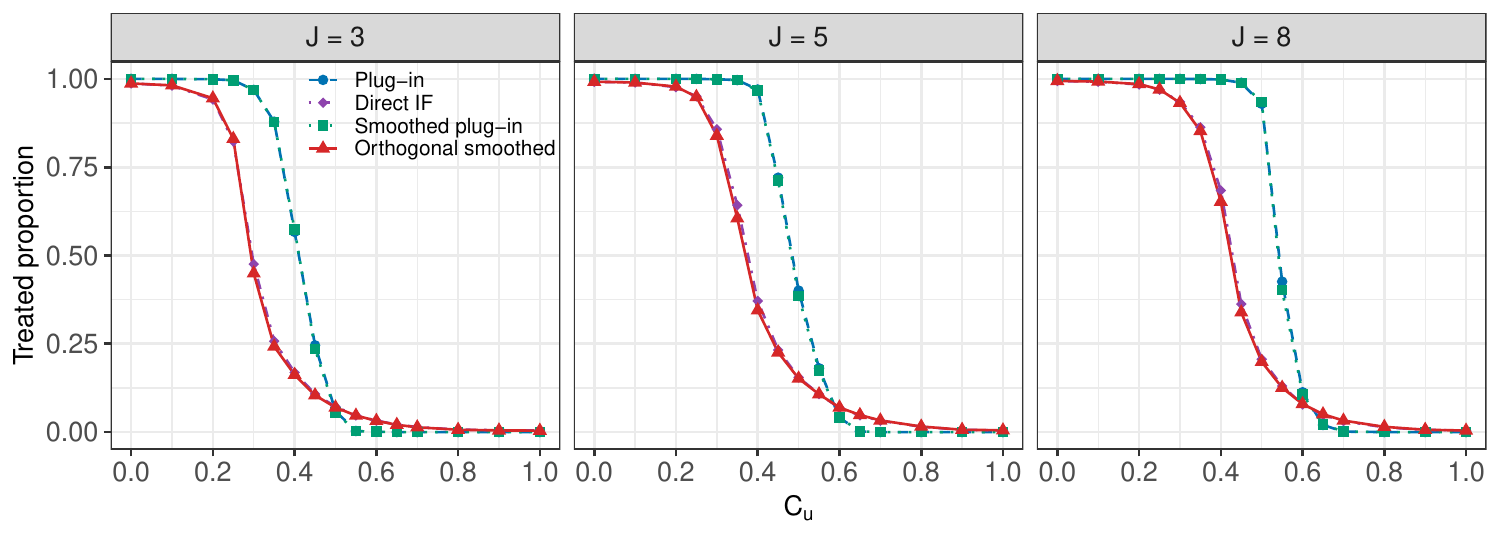}
\caption{Mean treated proportion as the utility threshold $C_u$ varies with sample size $n=1000$ and outcome levels $J=3$, 5, and 8.}
\label{fig:trpr-n1000}
\end{figure}

\section{Application to SIPP data}\label{sec:app}

In this section, we turn to a real-world dataset to illustrate the proposed methodology: the \href{https://www.census.gov/programs-surveys/sipp/data/datasets/2022-data/2022.html}{2022 Survey of Income and Program Participation (SIPP) data}, a longitudinal survey conducted by the U.S.\ Census Bureau. 
Of particular interest is a decision problem: among employed retirement-account owners, does holding an account through the main employer or business improve self-reported health?
Motivated by this question, we illustrate how the proposed method learns hypothetical allocation rules under different preferences over health benefits and non-benefits when health outcomes are ordinal.

In the SIPP data, the ordinal outcome $Y$ is an individual's self-reported health status, with 5 categories, from ``Poor'' with $ Y_i = 0 $ to ``Excellent'' with $ Y_i = 4 $.
The binary treatment $A_i = 1$ indicates that an individual held a 401 (k), 403 (b), 503 (b), or Thrift Savings Plan account provided through their main employer {or business} during the reference period, whereas $ A_i = 0 $ {indicates ownership of such an account without one provided through the main employer or business}.
In addition, we adjust for a collection of {observed} covariates $X$, including age, gender, race, {ethnicity,} education level, household income-to-poverty ratio, {marital status}, health insurance coverage, and work limitation, so that the covariates $X$ include rich demographic, socioeconomic, and health-related characteristics that {may be associated with} both participation in employer-sponsored retirement plans and self-reported health status.
{We retain all $n=8319$ complete case records for December 2021 from respondents who owned one of these accounts during 2021 and had at least one job that December. Each record corresponds to a distinct individual. The analytic sample contains 6624 treated observations ($A=1$) and 1695 control observations ($A=0$).}

{For all four estimators in Table~\ref{tab:estimators}, we estimate the propensity score using logistic regression with an intercept, linear and quadratic age terms, and main effects for gender, race, ethnicity, education, income-to-poverty ratio, marital status, health insurance coverage, and work limitation. Conditional outcome probabilities are estimated using multinomial logistic regression with the same terms and a treatment indicator. We learn depth-2 decision trees using threefold rotating sample splits, with equal sample sizes for nuisance estimation, policy learning, and evaluation, and summarize results over 100 bootstrap replications. Section~\ref{app:sipp-results} of the Supplementary Material gives the detailed procedure regarding this application.}

The proportions of individuals assigned the active treatment by each estimator under different $C_u$'s are reported in Figure~\ref{fig:sipp-trP-new}. 
We can see that the treated proportion varies greatly with the utility threshold $C_u$ in \eqref{eq:asym-shift}, which reflects the asymmetry of heterogeneous utilities, as shown in the left panel of Figure~\ref{fig:sipp-trP-new}. 
All four methods assign treatment to nearly everyone at the lower end of the threshold range and to nearly no one at the upper end, with the main transition occurring between approximately $0.25$ and $0.35$, as indicated by the right panel of Figure~\ref{fig:sipp-trP-new}. 
As $C_u$ decreases, the utility loss for the non-benefited group becomes less severe relative to the utility gain for the benefited group, and the treated proportion increases.
The direct and smoothed plug-in estimators display a sharper transition, whereas the direct influence function and orthogonal smoothed estimators begin reducing treatment at lower thresholds and retain larger treated proportions {toward the upper end of the transition region}.
The corrected estimators therefore exhibit a more gradual response to the utility threshold, consistent with the qualitative pattern in Section~\ref{sec:utility-sensitivity}.
Table~\ref{tab:sipp-Rsup-new} in Section~\ref{app:sipp-results} of the Supplementary Material provides the corresponding numerical summaries and bootstrap standard deviations, showing appreciable variability in treatment allocation near the transition region.

\begin{figure}[htbp]
\centering
\includegraphics[width=0.92\linewidth]{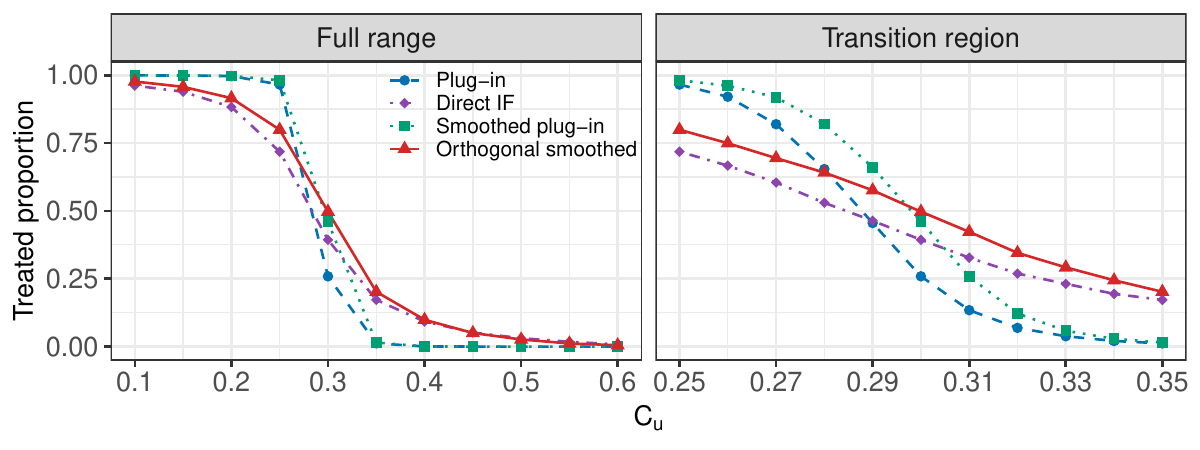}
\caption{{Treated proportions under different $C_u$'s for the December SIPP.}
The left panel displays $C_u\in[0.1,0.6]$, and the right panel details $C_u\in[0.25,0.35]$.}
\label{fig:sipp-trP-new}
\end{figure}

These results show how the proposed framework can inform recommendations for retirement-account ownership through the main employer or business among individuals who already own retirement accounts.
The orthogonal smoothed policy's mean treated proportion falls from approximately {$0.8$} at $C_u=0.25$ to {$0.5$} at $C_u=0.3$.
Thus, placing moderately greater weight on the utility loss for non-benefited individuals shifts the recommended allocation from broadly assigning \(A=1\) to a more selective policy.
Because $C_u$ represents a utility trade-off, this sensitivity highlights the importance of eliciting decision makers' relative valuations of benefit and non-benefit before choosing an allocation rule.
The proposed method makes this trade-off explicit while accommodating ordinal health outcomes without assigning arbitrary numerical distances between health categories and accounting for partial identification of individual benefit.
Within this framework, the orthogonal smoothed estimator produces a more gradual allocation response than the plug-in estimators around the transition, reducing the abruptness of changes in the average recommended treated proportion as preferences vary.
These features provide a practical way to translate stated preferences into allocation rules and examine how those rules change under alternative utility judgments.

\section{Discussion}\label{sec:discuss}

This paper develops a theory of minimax regret policy learning for ordinal outcomes with heterogeneous utilities and partially identified probability of individual benefit. We characterize the optimal treatment rule and establish excess regret guarantees that quantify how nonsmooth identification bounds affect policy learning. These guarantees reveal how the margin condition, nuisance estimation rates, and policy class complexity govern the choice of smoothing parameter and convergence toward the best policy in the chosen class.
{Our construction builds on smoothing-based estimation of causal bounds \citep{levis2025covariate} and irregular functionals \citep{whitehouse2025inference}. Here, smoothing and orthogonalization are used to learn a minimax regret policy for ordinal outcomes with subgroup-specific utilities. The analysis allows $\beta_n$ to diverge, explicitly balances smoothing approximation error against the nuisance remainder that grows with $\beta_n$, and incorporates policy class complexity into the excess-regret guarantee.}

Several extensions merit further study. Bounds under alternative identification assumptions, including settings with unmeasured confounding or noncompliance \citep{gabriel2024sharp}, could motivate related decision criteria, but their incorporation requires corresponding score constructions and regularity analysis. Allowing stochastic policies may improve minimax regret in ambiguous regions \citep{cui2021individualized,cui2025policy}. 
Extensions to multiple treatments or continuous outcomes could draw on
multi-action policy learning \citep{zhou2023offline} and bounds on the
distribution of treatment effects \citep{fan2010sharp}, respectively,
while requiring corresponding developments in the utility specification,
identification bounds, and estimation theory.

\section*{Supplementary Material}

{The Supplementary Material provides proofs, additional theoretical results, simulation results, and implementation details.}

Code for reproducing the simulations and the SIPP analysis is available at \url{https://github.com/tsanyuih/orthogonal_PL}.

\bibliographystyle{apalike}
\bibliography{mybib}

\clearpage
\appendix
\setcounter{equation}{0}
\renewcommand{\theequation}{S\arabic{equation}}
\setcounter{lemma}{0}
\renewcommand{\thelemma}{S\arabic{lemma}}
\setcounter{assumption}{0}
\renewcommand{\theassumption}{S\arabic{assumption}}
\setcounter{theorem}{0}
\renewcommand{\thetheorem}{S\arabic{theorem}}
\setcounter{proposition}{0}
\renewcommand{\theproposition}{S\arabic{proposition}}

\setcounter{table}{0}
\renewcommand{\thetable}{S\arabic{table}}
\setcounter{figure}{0}
\renewcommand{\thefigure}{S\arabic{figure}}

\renewcommand{\theHequation}{S\arabic{equation}}
\renewcommand{\theHtable}{S\arabic{table}}
\renewcommand{\theHfigure}{S\arabic{figure}}
\renewcommand{\theHlemma}{S\arabic{lemma}}
\renewcommand{\theHassumption}{S\arabic{assumption}}
\renewcommand{\theHtheorem}{S\arabic{theorem}}
\renewcommand{\theHproposition}{S\arabic{proposition}}

\begin{center}
\Large \bf Supplementary Material
\end{center}
\begin{appendices}
The Supplementary Material provides proofs, additional theoretical results, simulation results, and implementation details.

\section{Properties of the smoothed maximum function}\label{app:sm-property}
Given a vector of functions $ g := ( g_1, \ldots, g_K ) $ with $ g_k: \mathcal{X} \rightarrow \mathbb{R}, ~~ k = 1, \ldots, K $.
Fix $ x \in \mathcal{X} $, $ g \in \mathbb{R}^K $ is a constant vector.
Our target function is $ G:= \max_{1 \leq k \leq K} g_k $.
We now prove a variety of properties for the smoothed maximum approximation of $G$,
\begin{equation*}
G^{\beta_n} := \frac{ \sum_{k = 1}^K g_k \exp (\beta_n g_k) }{ \sum_{i = 1}^K \exp(\beta_n g_i ) }.
\end{equation*}
We start by proving a number of basic analytical properties of the soft maximum function that will be useful throughout our work.
In the sequel, we define the $k$-th softmax weight $ w_k^{\beta_n} (g) $ to be
\begin{equation*}
w_k^{\beta_n} (g) := \frac{ \exp (\beta_n g_k) }{ \sum_{i = 1}^K \exp(\beta_n g_i ) }.
\end{equation*}

{In the following lemma, $\|\cdot\|_2$ and $\|\cdot\|_\infty$ denote the Euclidean and maximum norms of vectors, and $\|\cdot\|_{op}$ denotes the matrix operator norm induced by $\|\cdot\|_2$. We write $a\lesssim b$ if $a\leq Cb$ for some constant $C>0$.}

\begin{lemma}\label{lem:sm-property}
For any $\beta_n > 0$, the following properties hold for the smoothed maximum approximation $ G^{\beta_n} $.
\begin{enumerate}[label = (\roman*)]
\item If there exist $i, j \in \{ 1, \ldots, K \}$ such that $ g_i \neq g_j $, then $ G^{\beta_n} $ is strictly increasing with respect to $\beta_n$.
Furthermore,
\begin{equation*}
0 \leq \max_{1 \leq k \leq K} g_k - G^{\beta_n} \leq \frac{\log K}{\beta_n}.
\end{equation*}
As a result, $ G^{\beta_n} \rightarrow \max_{1 \leq k \leq K} g_k $ as $\beta_n \rightarrow \infty$.

\item The partial derivative of $ G^{\beta_n} $ with respect to the $k$-th component is given by
\begin{equation*}
\frac{\partial G^{\beta_n}}{\partial g_k} = w_k^{\beta_n} (g) \left\{ 1 + \beta_n \cdot \big( g_k - G^{\beta_n} \big) \right\}.
\end{equation*}
Furthermore, we have $$ \sup_{g \in \mathbb{R}^K} || \nabla_g G^{\beta_n} ||_{\infty} \leq 1 + \log K.
$$

\item The function $ G^{\beta_n} $ is Lipschitz continuous, that is,
\begin{equation*}
| G^{\beta_n} - H^{\beta_n} | \lesssim || g - h ||_{2},
\end{equation*}
where $ h \in \mathbb{R}^K $ and $H^{\beta_n}$ is the smoothed maximum approximation of $h$.

\item The Hessian matrix of $ G^{\beta_n} $ satisfies $ || \nabla_g^2 G^{\beta_n} ||_{op} \lesssim \beta_n $.

\end{enumerate}
\end{lemma}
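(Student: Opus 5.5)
The plan is to write $G^{\beta_n}$ in terms of the softmax weights $w_k=w_k^{\beta_n}(g)$ and treat it as the derivative of the log-sum-exp function. Set
\[
F(\beta)=\log\sum_k e^{\beta g_k}.
\]
Then $G^{\beta}=F'(\beta)=\sum_k w_k g_k$, and differentiating once more in $\beta$ gives
\[
\partial_\beta G^{\beta}=\sum_k w_k (g_k-G^\beta)^2 ,
\]
which is the variance of $g$ under the weights $w$. This variance is strictly positive whenever two components of $g$ differ, because all weights are positive. Hence $G^\beta$ is strictly increasing in $\beta$.

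For the gap in (i), $G^\beta\le \max_k g_k$ is immediate because $G^\beta$ is a convex combination of the $g_k$. For the lower bound, I will use the entropy identity
\[
\textstyle\sum_k w_k g_k=\beta^{-1}\bigl\{F(\beta)+\sum_k w_k\log w_k\bigr\}.
\]
Combining it with $F(\beta)\ge \beta\max_k g_k$ and $-\sum_k w_k\log w_k\le \log K$ gives $\max_k g_k-G^\beta\le \log K/\beta$. Convergence of $G^{\beta_n}$ to the maximum as $\beta_n\to\infty$ follows directly.

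For (ii), differentiating $w_i$ with respect to $g_k$ gives
\[
\partial w_i/\partial g_k=\beta w_i(\mathbb{I}\{i=k\}-w_k),
\]
so that
\[
\partial G/\partial g_k=w_k+\beta\textstyle\sum_i g_i w_i(\mathbb{I}\{i=k\}-w_k)=w_k\{1+\beta(g_k-G^\beta)\}.
\]
This is the formula claimed in (ii).

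The uniform bound on the gradient is the step I expect to be the main obstacle, because the factor $\beta(g_k-G^\beta)$ is unbounded. I will split on the sign of $g_k-G^\beta$.

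When $g_k\ge G^\beta$, part (i) gives $\beta(g_k-G^\beta)\le \beta(\max_j g_j-G^\beta)\le\log K$. Since $w_k\le 1$, this yields $\partial G/\partial g_k\le 1+\log K$.

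When $g_k<G^\beta$, write $s=\beta(G^\beta-g_k)>0$. The weight satisfies $w_k\le e^{\beta(g_k-\max_j g_j)}\le e^{-s}$, using $G^\beta\le\max_j g_j$. Therefore $|w_k(1-s)|\le \max(1,\sup_{s>0}(s-1)e^{-s})\le 1$.

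Both cases together give $\|\nabla G^\beta\|_\infty\le 1+\log K$ uniformly over $g\in\mathbb{R}^K$.

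For (iii), I apply the mean value theorem along the segment from $h$ to $g$:
\[
|G^\beta-H^\beta|\le \sup\|\nabla G\|_2\,\|g-h\|_2\le \sqrt K(1+\log K)\|g-h\|_2 .
\]
This Lipschitz constant does not depend on $\beta$.

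For (iv), I differentiate the expression in (ii) once more. The $(k,l)$ entry of the Hessian is
\[
\partial_l[w_k\{1+\beta(g_k-G)\}]=\beta w_k(\mathbb{I}\{k=l\}-w_l)\{1+\beta(g_k-G)\}+w_k\beta\{\mathbb{I}\{k=l\}-\partial_l G\}.
\]
The second term is bounded by $\beta(1+1+\log K)$ using (ii). In the first term, a factor $\beta$ is pulled out, and the remaining product $w_k(\mathbb{I}\{k=l\}-w_l)\{1+\beta(g_k-G)\}$ is bounded by the same two-case argument used in (ii); the extra factor $|\mathbb{I}\{k=l\}-w_l|\le 1$ does not change that argument. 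Each entry is therefore $O(\beta)$ with constants depending only on $K$. Since $\|\cdot\|_{op}\le K\max_{k,l}|\cdot|$, this gives $\|\nabla^2 G^\beta\|_{op}\lesssim\beta$.
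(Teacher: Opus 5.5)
Your proof is correct and follows the paper's argument almost step for step: the variance formula for $\partial_\beta G^\beta$ and the entropy identity for (i), the softmax-weight derivative for (ii), the mean value theorem with constant $\sqrt K(1+\log K)$ for (iii), and differentiating $q_k=\partial G^\beta/\partial g_k$ once more for (iv). There are two local differences: the paper gets the uniform gradient bound in one line from $\beta(g_k-G^\beta)=\log w_k+\mathcal{E}(w)$ with $\mathcal{E}(w)=-\sum_i w_i\log w_i$, i.e.\ $q_k=w_k(1+\log w_k+\mathcal{E}(w))\in[-1/e,\,1+\log K]$, where your sign split with $w_k\le e^{-s}$ reaches the same bound; and you bound the Hessian's operator norm entrywise, whereas the paper uses the maximal absolute row sum of the symmetric Hessian.
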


\begin{proof}

Write $\beta=\beta_n$, $w_k=w_k^{\beta}(g)$, and $\mathcal E(w)=-\sum_{k=1}^K w_k\log w_k$. The weights are positive and sum to one, and $0\leq \mathcal E(w)\leq\log K$.

\noindent (i) Direct differentiation gives
\begin{equation*}
\frac{\partial G^{\beta}}{\partial\beta}
=\sum_{k=1}^K w_k(g_k-G^{\beta})^2\geq0.
\end{equation*}
The derivative is strictly positive unless all components of $g$ are equal. Moreover, the identity
\begin{equation*}
\log\Big(\sum_{k=1}^K e^{\beta g_k}\Big)=\beta G^{\beta}+\mathcal E(w)
\end{equation*}
and $\log(\sum_k e^{\beta g_k})\geq\beta G$ imply
\begin{equation*}
0\leq G-G^{\beta}\leq \mathcal E(w)/\beta\leq(\log K)/\beta.
\end{equation*}
The lower bound follows because $G^{\beta}$ is a weighted average of the components. This also proves convergence as $\beta\to\infty$.

\noindent (ii) Since $\partial w_k/\partial g_l=\beta w_k(\delta_{kl}-w_l)$, where $\delta_{kl}$ is the Kronecker delta,
\[
q_k:=\frac{\partial G^{\beta}}{\partial g_k}
=w_k\{1+\beta(g_k-G^{\beta})\}
=w_k(1+\log w_k+\mathcal E(w)).
\]
Consequently, $q_k\leq1+\log K$, while $q_k\geq w_k\log w_k\geq-1/e$. Thus $\max_k|q_k|\leq1+\log K$, uniformly in $g$ and $\beta$.

\noindent (iii) The mean value theorem and (ii) yield
\[
|G^{\beta}-H^{\beta}|
\leq\sqrt K(1+\log K)\|g-h\|_2.
\]

\noindent (iv) Differentiating $q_k$ gives
\[
\frac{\partial q_k}{\partial g_l}
=\beta\{(\delta_{kl}-w_l)q_k+w_k(\delta_{kl}-q_l)\}.
\]
Therefore each row of the Hessian satisfies
\begin{equation*}
\begin{aligned}
\sum_{l=1}^K\left|\frac{\partial q_k}{\partial g_l}\right|&\leq\beta\left\{2|q_k|+w_k\left(1+\sum_{l=1}^K|q_l|\right)\right\}\\ &\leq\beta\{2(1+\log K)+1+K(1+\log K)\}.
\end{aligned}
\end{equation*}
Since the Hessian is symmetric, its operator norm is at most its maximum absolute row sum. This proves the claimed bound, with constants depending only on the fixed dimension $K$.
\end{proof}

{For a vector $v=(v_1,\ldots,v_K)\in\mathbb R^K$, let $v_{(1)}\geq v_{(2)}\geq\cdots\geq v_{(K)}$ denote its ordered components, as in Assumption~\ref{assump:margin-cond}. Define the component gaps and the smoothing error, respectively, by
\[
\Delta_i(v):=v_{(1)}-v_i\geq0,
\qquad
\Delta^{\beta_n}(v):=v_{(1)}-G^{\beta_n}(v)\geq0.
\]
These definitions apply pointwise when $v$ is random. The subscript $i$ indexes a component, while the superscript $\beta_n$ specifies the smoothing parameter. We bound the expected smoothing error by adapting the softmax bias argument of \cite{chen2023inference}.}

\begin{lemma}\label{lem:approx-error}
{Let $v=(v_1,\ldots,v_K)$ be a random vector and $\beta_n\geq0$. Then
\[
\EP[\Delta^{\beta_n}(v)]
\leq\sum_{i=1}^K\EP\left[\Delta_i(v)e^{-\beta_n\Delta_i(v)}\right].
\]
Furthermore, if the top-two gap satisfies
$\prob(v_{(1)}-v_{(2)}\leq t)\leq C_0t^\delta$ for $t\geq0$ and some $\delta>0$, as in Assumption~\ref{assump:margin-cond}, there is a constant $C>0$ depending only on $C_0$ and $\delta$ such that
\[
\EP\left[\Delta_i(v)e^{-\beta_n\Delta_i(v)}\right]
\leq\frac{C}{\beta_n^{1+\delta}},
\qquad i=1,\ldots,K,\quad\beta_n>0.
\]
Consequently, $\EP[\Delta^{\beta_n}(v)]\leq KC\beta_n^{-(1+\delta)}$.}
\end{lemma}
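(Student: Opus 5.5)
The plan has two parts: a pointwise deterministic bound, and then an expectation bound under the margin condition. For the first part, fix a realization of $v$ and let $i^*$ be an index with $v_{i^*}=v_{(1)}$. Write $\Delta_i=\Delta_i(v)$. Since $G^{\beta_n}(v)=\sum_i w_i v_i$ and the weights sum to one, we have
\[
\Delta^{\beta_n}(v)=\sum_{i=1}^K w_i^{\beta_n}(v)\,\Delta_i .
\]
Each weight can be written as $w_i^{\beta_n}(v)=e^{-\beta_n\Delta_i}/\sum_k e^{-\beta_n\Delta_k}$. The denominator is at least $e^{-\beta_n\Delta_{i^*}}=1$, so $w_i^{\beta_n}(v)\leq e^{-\beta_n\Delta_i}$. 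Combining these gives $\Delta^{\beta_n}(v)\leq\sum_i\Delta_i e^{-\beta_n\Delta_i}$ pointwise. Taking expectations yields the first claim, and this also covers $\beta_n=0$.

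For the second part, fix $i$. On the event that $i$ is the maximizing index, $\Delta_i=0$ and the term vanishes. Otherwise $\Delta_i\geq v_{(1)}-v_{(2)}=:D$, because any non-top component is at most the second-largest value. (If there are ties at the top, then $D=0$; the margin condition at $t=0$ makes this a null event, and in any case $\Delta_i=0$ for the tied indices.) The function $f(s)=se^{-\beta_n s}$ is increasing on $[0,1/\beta_n]$, decreasing afterwards, and bounded by $(e\beta_n)^{-1}$. This monotonicity alone does not let us replace $\Delta_i$ by $D$, so I would use a direct bound instead: $f(\Delta_i)\leq \sup_{s\geq D}f(s)$. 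That supremum equals $f(D)$ when $D\geq1/\beta_n$ and $(e\beta_n)^{-1}$ when $D<1/\beta_n$. Hence
\[
\EP[f(\Delta_i)]\leq (e\beta_n)^{-1}\prob(D<1/\beta_n)+\EP\big[De^{-\beta_n D}\indicator\{D\geq 1/\beta_n\}\big].
\]
The first term is at most $C_0\beta_n^{-1-\delta}/e$ by the margin condition.

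For the second term, I would write $De^{-\beta_n D}=\int_0^\infty \indicator\{s<D\}\,|f'(s)|\,ds$-type representations, or more simply use the layer-cake formula with the CDF bound $F_D(t)\leq C_0t^\delta$. Integrating by parts,
\[
\EP[f(D)\indicator\{D\geq1/\beta_n\}]=\int_{1/\beta_n}^\infty f(t)\,dF_D(t)\leq -f(1/\beta_n)F_D(1/\beta_n)-\int_{1/\beta_n}^\infty f'(t)F_D(t)\,dt .
\]
The boundary term at infinity vanishes. Since $f'\leq0$ on this range, $-f'(t)F_D(t)\leq C_0|f'(t)|t^\delta$. Here $|f'(t)|=e^{-\beta_n t}(\beta_n t-1)\leq \beta_n t e^{-\beta_n t}$. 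Substituting $u=\beta_n t$, the integral becomes at most
\[
C_0\beta_n^{-1-\delta}\int_0^\infty u^{1+\delta}e^{-u}\,du=C_0\Gamma(2+\delta)\beta_n^{-1-\delta}.
\]
Collecting terms gives $C=C_0\{e^{-1}+\Gamma(2+\delta)\}$, which depends only on $C_0$ and $\delta$. Summing over the $K$ indices then gives the final statement.

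The main obstacle is the careful Stieltjes integration by parts (or layer-cake step) with only a CDF upper bound. Its key point is that, on $[1/\beta_n,\infty)$, $f$ is decreasing, so the bound $F_D\leq C_0t^\delta$ can be substituted with the correct sign. The boundary term is nonpositive and can simply be dropped.
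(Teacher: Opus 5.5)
Your proof is correct. The first part is exactly the paper's argument: you write $\Delta^{\beta_n}(v)$ as a softmax-weighted average of the gaps and bound each weight by $e^{-\beta_n\Delta_i}$, since the denominator is at least one.

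For the second part you take a different route. The paper never passes to the top-two gap pointwise. It observes that $F_i(t):=\prob(0<\Delta_i\le t)\le\prob(v_{(1)}-v_{(2)}\le t)\le C_0t^\delta$ and partitions $(0,\infty)$ into cells $(r/\beta_n,(r+1)/\beta_n]$. On each cell it bounds $se^{-\beta_n s}$ by $(r+1)e^{-r}/\beta_n$ and the cell probability by $F_i((r+1)/\beta_n)$, then sums to get $C=C_0\sum_{r\ge0}(r+1)^{1+\delta}e^{-r}$. Because only a cellwise supremum of $f(s)=se^{-\beta_n s}$ is used, the non-monotonicity of $f$ never enters and no integration by parts is needed. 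You instead dominate $f(\Delta_i)$ pointwise by the envelope $\sup_{s\ge D}f(s)$ of the single scalar $D=v_{(1)}-v_{(2)}$, split at $D=1/\beta_n$, and integrate the decreasing tail of $f$ against $F_D\le C_0t^\delta$. This makes explicit why the hump of $f$ is harmless and yields the cleaner constant $C_0\{e^{-1}+\Gamma(2+\delta)\}$, at the price of a Stieltjes step.

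One small technical point in that step. If $D$ has an atom at $1/\beta_n$, the exact integration-by-parts identity carries $F_D(1/\beta_n^-)$, not $F_D(1/\beta_n)$. Your intermediate display therefore undershoots the left side by $f(1/\beta_n)\prob(D=1/\beta_n)$. Since you drop the boundary term, the bound you actually use is still valid. You can also avoid the issue entirely: on $\{D\ge1/\beta_n\}$ write $f(D)=\int_D^\infty\{-f'(s)\}\,ds$ and apply Tonelli. This gives $\EP[f(D)\indicator\{D\ge1/\beta_n\}]\le\int_{1/\beta_n}^\infty\{-f'(s)\}\,C_0s^\delta\,ds$ directly.
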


\begin{proof}

Write $\beta=\beta_n$ and, within this proof, abbreviate $\Delta_i=\Delta_i(v)$. Since at least one $\Delta_i$ equals zero,
\[
0\leq\Delta^\beta(v)
=\frac{\sum_{i=1}^K\Delta_i e^{-\beta\Delta_i}}{\sum_{i=1}^K e^{-\beta\Delta_i}}
\leq\sum_{i=1}^K\Delta_i e^{-\beta\Delta_i}.
\]
Taking expectations proves the first bound. Whenever $\Delta_i>0$, we have $v_{(1)}-v_{(2)}\leq\Delta_i$, so
\[
F_i(t):=\prob(0<\Delta_i\leq t)
\leq\prob(v_{(1)}-v_{(2)}\leq t)\leq C_0t^\delta.
\]
For $\beta>0$, partitioning $(0,\infty)$ into intervals $(r/\beta,(r+1)/\beta]$, $r=0,1,\ldots$, gives
\[
\begin{aligned}
\EP[\Delta_i e^{-\beta\Delta_i}]\leq\frac{1}{\beta}\sum_{r=0}^\infty(r+1)e^{-r}F_i((r+1)/\beta)\leq\frac{C_0}{\beta^{1+\delta}}\sum_{r=0}^\infty(r+1)^{1+\delta}e^{-r}.
\end{aligned}
\]
The series is finite for every $\delta>0$, so the claimed constant depends only on $C_0$ and $\delta$. Summing over $i$ gives the bound for $\EP[\Delta^\beta(v)]$.
\end{proof}

\section{Proposition proofs and orthogonality background}

\subsection{Proof of Proposition~\ref{prop:max-regret}}
\begin{proof}

For $\pi(X)\in\{0,1\}$, the conditional regret at $X$ is
\[
\begin{aligned}
\{u^b(X)-u^n(X)\}\Big[&(1-\pi(X))\max\{\eta(X)-C_u(X),0\}\\ &+\pi(X)\max\{C_u(X)-\eta(X),0\}\Big].
\end{aligned}
\]
Since $u^b(X)-u^n(X)>0$, its maximum over $\eta(X)\in[\eta_L(X),\eta_U(X)]$ is attained at
$\bar\eta(X)=\pi(X)\eta_L(X)+(1-\pi(X))\eta_U(X)$.
This measurable endpoint selection simultaneously attains the pointwise maxima, so
\begin{align*}
R_{\sup}(\pi)
&=\EP\Big[\{u^b(X)-u^n(X)\}\{(1-\pi(X))\psi_U(X)-\pi(X)\psi_L(X)\}\Big]\\
&=-\EP\Big[\{u^b(X)-u^n(X)\}\pi(X)\{\psi_U(X)+\psi_L(X)\}\Big]\\ &\quad+\EP\Big[\{u^b(X)-u^n(X)\}\psi_U(X)\Big],
\end{align*}
where $\psi_L(X)=\min\{\eta_L(X)-C_u(X),0\}$ and
$\psi_U(X)=\max\{\eta_U(X)-C_u(X),0\}$.
This gives Proposition~\ref{prop:max-regret}, with
$C=\EP[\{u^b(X)-u^n(X)\}\psi_U(X)]$, which is independent of $\pi$.
\end{proof}

\subsection{Proof of Proposition~\ref{prop:approx-error}}
\begin{proof}

Using the smoothing error $\Delta^{\beta_n}(v)$ defined before Lemma~\ref{lem:approx-error}, we obtain the following identity.
Suppressing the argument $X$, the definitions of the four smoothed maxima give
\begin{equation}\label{eq:smoothing-error-identity}
\begin{aligned}
\psi^{\beta_n}-\psi
=-\Delta^{\beta_n}(\delta_L)+\Delta^{\beta_n}((\delta_L,0))+\Delta^{\beta_n}(-\delta_U)-\Delta^{\beta_n}((-\delta_U,0)).
\end{aligned}
\end{equation}
With the common additive constant omitted as in Section~\ref{sec:smooth-approx}, the regret difference is therefore
\[
R_{\sup}^{\beta_n}(\pi)-R_{\sup}(\pi)=-\EP\Big[\{u^b(X)-u^n(X)\}\pi(X)\{\psi^{\beta_n}(X)-\psi(X)\}\Big].
\]
By Lemma~\ref{lem:sm-property}, each $\Delta^{\beta_n}(v)$ is between zero and $\log(\dim v)/\beta_n$.
The triangle inequality, boundedness of $u^b-u^n$, and $0\leq\pi\leq1$ yield
$|R_{\sup}^{\beta_n}(\pi)-R_{\sup}(\pi)|\lesssim\beta_n^{-1}$.
\end{proof}

\subsection{Orthogonality background and proof of Proposition~\ref{prop:orthogonal-score}}\label{app:orthog-background}

{
\noindent\textbf{Plug-in expansion and Neyman orthogonality.}\par
For a fixed policy $\pi$ and smoothing parameter $\beta_n>0$, write
\[
\Gamma^{\beta_n}(O,\pi;\theta)
=-\{u^b(X)-u^n(X)\}\pi(X)\psi^{\beta_n}(X;m),
\]
so that $R_{\sup}^{\beta_n}(\pi;\theta)=\EP[\Gamma^{\beta_n}(O,\pi;\theta)]$, with the policy-independent constant omitted as in Section~\ref{sec:smooth-approx} of the main paper. To describe the effect of nuisance estimation, consider the expansion \citep{Kennedy2023Semiparametric, chernozhukov2022locally}
\begin{equation}\label{eq:von-mises}
\begin{aligned}
&R_{\sup}^{\beta_n}(\pi;\hat\theta)-R_{\sup}^{\beta_n}(\pi;\theta_0)=\partial_\theta\EP[\Gamma^{\beta_n}(O,\pi;\theta_0)][\hat\theta-\theta_0]
+R_2(\hat\theta,\theta_0).
\end{aligned}
\end{equation}
Here $R_2$ is the second-order remainder, whose dependence on $\pi$ and $\beta_n$ is suppressed. Its bound can grow with $\beta_n$; the gradient and Hessian bounds in Lemma~\ref{lem:sm-property} describe this dependence. The first term is the G\^ateaux derivative with respect to the nuisance functions, with the expectation taken under the true distribution. For an admissible direction $h=\theta-\theta_0$, it is defined by
\[
\begin{aligned}
&\partial_\theta\EP[\Gamma^{\beta_n}(O,\pi;\theta_0)][h] =\lim_{t\to0}\frac{\EP[\Gamma^{\beta_n}(O,\pi;\theta_0+t h)]
-\EP[\Gamma^{\beta_n}(O,\pi;\theta_0)]}{t},
\end{aligned}
\]
whenever the limit exists. The admissible paths used for the score verification are specified below.

\begin{definition}\label{def:orthogonal-score}
A function $\Psi^{\beta_n}:\mathcal O\times\Pi\times\Theta\to\mathbb R$ is a Neyman orthogonal score for $R_{\sup}^{\beta_n}$ if, for every fixed $\pi\in\Pi$, it satisfies:
\begin{enumerate}[label=(\alph*)]
\item Its expectation recovers the target at the true nuisance functions:
\[
\EP[\Psi^{\beta_n}(O,\pi;\theta_0)]=R_{\sup}^{\beta_n}(\pi;\theta_0).
\]
\item Its expected value has zero derivative at $\theta_0$ along every admissible nuisance direction $h$:
\begin{equation}\label{eq:NO-cond}
\partial_\theta\EP[\Psi^{\beta_n}(O,\pi;\theta_0)][h]=0.
\end{equation}
\end{enumerate}
\end{definition}
Condition~\eqref{eq:NO-cond} removes the first-order effect of nuisance estimation on the expected score. Under suitable smoothness conditions, the remaining nuisance error is second-order, although its magnitude can depend on $\beta_n$. Theorem~\ref{thm:asymp-converg} quantifies this remainder together with the smoothing approximation error.

\medskip
\noindent\textbf{Verification of the proposed score.}\par
}

We prove the Neyman orthogonality of $ \Psi_L^{\beta_n} $ with respect to the generic nuisance functions $e$ and $m_k$ ($k \in \mathcal{J}$), evaluated at their true values $e_0$ and $m_{0,k}$; a similar proof applies to $ \Psi_U^{\beta_n} $.
Define
\begin{align*}
\phi_L^{\beta_n} (O; \theta) &= \sum_{j = 0}^{J - 1} \frac{ \partial \psi_L^{\beta_n} (X; {m}) }{ \partial \delta_{L, j} } L_j (O; \theta), \\ \phi_U^{\beta_n} (O; \theta) &= \sum_{j = 0}^{J - 1} \frac{ \partial \psi_U^{\beta_n} (X; {m}) }{ \partial \delta_{U, j} } U_j (O; \theta).
\end{align*}
{The full score is $\Psi^{\beta_n}(O,\pi;\theta)=\allowbreak -\{u^b(X)-u^n(X)\}\pi(X)\{\Psi_L^{\beta_n}(O;\theta)+\allowbreak \Psi_U^{\beta_n}(O;\theta)\}$. It suffices to verify the conditional derivative cancellations for the two components, since the prefactor depends only on $X$.} Here
\begin{align*}
\Psi_L^{\beta_n} (O; \theta) = \psi_L^{\beta_n} (X; {m}) + \phi_L^{\beta_n} (O; \theta), ~~ \Psi_U^{\beta_n} (O; \theta) = \psi_U^{\beta_n} (X; {m}) + \phi_U^{\beta_n} (O; \theta).
\end{align*}

At $\theta_0$, $\EP[T_{a,j}(O;\theta_0)\mid X]=0$ for every $a$ and $j$. The definitions of $L_j$, $U_j$, and $\phi^{\beta_n}$ therefore imply $\EP[\phi^{\beta_n}(O;\theta_0)\mid X]=0$. Consequently,
\[
\begin{aligned}
\EP[\Psi^{\beta_n}(O,\pi;\theta_0)]&=-\EP\Big[\{u^b(X)-u^n(X)\}\pi(X)\psi^{\beta_n}(X;m_0)\Big]\\ &=R_{\sup}^{\beta_n}(\pi;\theta_0).
\end{aligned}
\]

For fixed $\beta_n>0$, the derivatives below are taken along admissible nuisance paths
$m_t=m_0+t h_m$ and $e_t=e_0+t h_e$.
The treatment-specific outcome directions $h_{m,j}(a,\cdot)$ are bounded, and the propensity path remains in $(0,1)$ with
\[
\sup_{|t|\leq t_0}\max_{a\in\{0,1\}}
\left\|\frac{e_0(a,X)}{e_t(a,X)}\right\|_\infty<\infty
\]
for some $t_0>0$, where $e_t(1,X)=e_t(X)$ and $e_t(0,X)=1-e_t(X)$.

We first check orthogonality with respect to $m_k$, evaluated at $m_{0,k}$.
Note that $ \Psi_L^{\beta_n} $ is now pathwise differentiable with smoothed functional $ \psi^{\beta_n} $ instead of the nonsmooth functional $ \psi $.
We can check that the Gateaux derivative of $ \EP [ \Psi_L^{\beta_n} (O; \theta) ] $ with respect to each $ m_{k} $ vanishes, evaluated at $ \theta_0 $, that is,
\begin{equation}\label{eq:orthog-outcome}
\partial_{m_{k}} \EP [ \Psi_L^{\beta_n} (O; \theta_0) ] [\tilde{m}_{k} - m_{0, k}] = 0,
\end{equation}
{Here $\tilde m_k-m_{0,k}$ is an admissible outcome direction as specified above; in particular, each treatment-specific direction is square-integrable under the marginal law of $X$.}
Let $h_k(a,X):=\tilde{m}_k(a,X)-m_{0,k}(a,X)$. The scalar partial derivatives of $\delta_{L,j}$ with respect to the relevant components of $m_k$ are $-1$, 0 or 1, while the corresponding Gateaux derivatives in direction $h_k$ include the factor $h_k(a,X)$.
Hence
\begin{align*}
& \partial_{m_{k}} \EP [ \psi_L^{\beta_n} (X; m_0) ] [\tilde{m}_{k} - m_{0, k}] = \EP \Big[ \partial_{m_{k}} [ \psi_L^{\beta_n} (X; m_0) ] [\tilde{m}_{k} - m_{0, k}] \Big] \\
= \ & \EP \left[ \sum_{j = 0}^{J - 1} \frac{\partial \psi_L^{\beta_n} (X; m_0)}{\partial \delta_{L, j}} \partial_{m_{k}} [ \delta_{L, j} (X; m_0) ] [\tilde{m}_{k} - m_{0, k}] \right] \\
= \ & \EP \left[ \sum_{j = 0}^{J - 1} \frac{{A} \cdot \indicator ( j \leq k \leq J - 1 )}{e_0(X)} h_k(1,X) \frac{\partial \psi_L^{\beta_n} (X; m_0)}{\partial \delta_{L, j}} \right] \\
& - \EP \left[ \sum_{j = 0}^{J - 1} \frac{(1 - {A}) \cdot \indicator ( j \leq k \leq J - 1 )}{1 - e_0(X)} h_k(0,X) \frac{\partial \psi_L^{\beta_n} (X; m_0)}{\partial \delta_{L, j}} \right] \\
= \ & \EP \left[ \sum_{j = 0}^{k} \frac{{A}}{e_0(X)} h_k(1,X) \frac{\partial \psi_L^{\beta_n} (X; m_0)}{\partial \delta_{L, j}} \right] \\ & - \EP \left[ \sum_{j = 0}^{k} \frac{1 - {A}}{1 - e_0(X)} h_k(0,X) \frac{\partial \psi_L^{\beta_n} (X; m_0)}{\partial \delta_{L, j}} \right],
\end{align*}
{The first equality follows by dominated differentiation, using the uniform gradient bound in Lemma~\ref{lem:sm-property} and the integrable direction $h_k$; the second equality follows from the chain rule.}
Moreover, note that we can rewrite $ \phi_L^{\beta_n} $ as
\begin{align*}
\phi_L^{\beta_n} (O; \theta_0) = \ & \sum_{j = 0}^{J - 1} \frac{ \partial \psi_L^{\beta_n} (X; m_0) }{ \partial \delta_{L, j} } L_j (O;\theta_0) \\
= \ & \frac{ {A} }{ e_0(X) } \sum_{j = 0}^{J - 1} \frac{\partial \psi_L^{\beta_n} (X; m_0)}{\partial \delta_{L, j}} \sum_{k = j}^{J - 1} \Big( \indicator \{ {Y} = k \} - m_{0, k} (A, X) \Big) \\
& - \frac{ 1 - {A} }{ 1 - e_0(X) } \sum_{j = 0}^{J - 1} \frac{\partial \psi_L^{\beta_n} (X; m_0)}{\partial \delta_{L, j}} \sum_{k = j}^{J - 1} \Big( \indicator \{ {Y} = k \} - m_{0, k} (A, X) \Big) \\
= \ & \frac{ {A} }{ e_0(X) } \sum_{k = 0}^{J - 1} \Big( \indicator \{ {Y} = k \} - m_{0, k} (A, X) \Big) \sum_{j = 0}^{k} \frac{\partial \psi_L^{\beta_n} (X; m_0)}{\partial \delta_{L, j}} \\
& - \frac{ 1 - {A} }{ 1 - e_0(X) } \sum_{k = 0}^{J - 1} \Big( \indicator \{ {Y} = k \} - m_{0, k} (A, X) \Big) \sum_{j = 0}^{k} \frac{\partial \psi_L^{\beta_n} (X; m_0)}{\partial \delta_{L, j}},
\end{align*}
where $ m_{0, k} (a, X) = \prob ( Y_a = k \mid X ) $.
Then we have
\begin{align*}
& \partial_{m_{k}} \EP [ \phi_L^{\beta_n} (O; \theta_0) ] [\tilde{m}_{k} - m_{0, k}] \\ ={}& - \EP \left[ \sum_{j = 0}^{k} \frac{{A}}{e_0(X)} h_k(1,X) \frac{\partial \psi_L^{\beta_n} (X; m_0)}{\partial \delta_{L, j}} \right] + \EP \left[ \sum_{j = 0}^{k} \frac{1 - {A}}{1 - e_0(X)} h_k(0,X) \frac{\partial \psi_L^{\beta_n} (X; m_0)}{\partial \delta_{L, j}} \right].
\end{align*}
{The additional terms obtained by differentiating the gradient weights have conditional mean zero, because $\EP\{L_j(O;\theta_0)\mid X\}=0$.} This proves \eqref{eq:orthog-outcome}.

{For the propensity direction, hold $m=m_0$ fixed and write
$D_{L,j}(X)=\partial\psi_L^{\beta_n}(X;m_0)/\partial\delta_{L,j}$.
For every sufficiently small $t$, the exact conditional expectation is
\begin{align*}
&\EP[\phi_L^{\beta_n}(O;e_t,m_0)\mid X]\\
={}&\sum_{j=0}^{J-1}D_{L,j}(X)\sum_{k=j}^{J-1}
\Bigg\{\frac{e_0(X)}{e_t(X)}
\EP[\indicator\{Y=k\}-m_{0,k}(1,X)\mid A=1,X]\\
&\qquad-\frac{1-e_0(X)}{1-e_t(X)}
\EP[\indicator\{Y=k\}-m_{0,k}(0,X)\mid A=0,X]\Bigg\} =0.
\end{align*}
The path conditions make the correction integrable, so its expectation is identically zero along the path. Differentiating this constant expectation gives
$\partial_e\EP[\phi_L^{\beta_n}(O;\theta_0)][h_e]=0$;
$\psi_L^{\beta_n}(X;m_0)$ does not depend on $e$.
The same conditional cancellations apply to the upper bound component.
For simultaneous admissible perturbations of $e$ and $m$, the identity
\[
\EP[T_{a,j}(O;e_t,m_t)\mid X]
=-t\,\frac{e_0(a,X)}{e_t(a,X)}h_{m,j}(a,X)
\]
shows that the correction cancels the derivative of the smoothed plug-in term at $t=0$.
The bounded path directions, propensity ratios, and gradient and Hessian bounds in Lemma~\ref{lem:sm-property} justify dominated differentiation for fixed $\beta_n$.
Multiplication by the bounded factor $-\{u^b(X)-u^n(X)\}\pi(X)$ therefore gives the required zero derivative for the full score.}
This completes the proof.

\section{Additional theorem and proofs}\label{app:pf-thm}

\subsection{Fixed-policy regret estimation}\label{app:fixed-policy}

The following expansion separates sampling variability, smoothing approximation error, and nuisance estimation error for a fixed policy. It supports the excess-regret guarantee in Theorem~\ref{cor:regret-converge} of the main paper.
For a function $g$ of $O$, we use $\|g\|^2=\EP[\|g(O)\|_2^2]$ for its squared $L_2(\mathbb{P})$ norm, where $\|\cdot\|_2$ is the Euclidean norm.

\begin{theorem}\label{thm:asymp-converg}
Suppose Assumptions~\ref{assump:unconf-pos} -~\ref{assump:margin-cond} hold.
Moreover, we have access to $L^2$-consistent estimators of the propensity score and conditional outcome probabilities,
\begin{equation*}
|| \hat{e} - e_0 || = o_{\mathbb{P}} (1), ~~ || \hat{m}_{j} - m_{0, j} || = o_{\mathbb{P}} (1), ~~ j \in \mathcal{J},
\end{equation*}
where $ \hat{m}_{j} $ and $ \hat{e} $ are estimated from an independent sample.
Suppose $ || \hat{m} (A, X) ||_{L_{\infty} (\mathbb{P})} < B $ for some constant $B > 0$, and $ \prob (\epsilon \leq \hat{e} (X) \leq 1 - \epsilon) = 1 $ for some $ \epsilon > 0 $.
For $\beta_n\to\infty$, given a policy $\pi \in \Pi_n$, we have
\begin{equation}\label{eq:thm1}
\begin{aligned}
& \hat{R}_{\sup}^{\beta_n, \no} (\pi; \hat{\theta}) - R_{\sup} (\pi; \theta_0) \\
&\quad= (\Pn - \EP) \Psi^* (O, \pi; \theta_0) \\
&\qquad + O_{\mathbb{P}} \bigg( || \hat{e} - e_0 || \cdot \max_{0 \leq j \leq J - 1} || \hat{m}_{j} - m_{0, j} || \bigg) \\
&\qquad + O_{\mathbb{P}} \left( \beta_n^{- (1 + \delta)} \right)
+ O_{\mathbb{P}} \bigg( \beta_n \max_{0 \leq j \leq J - 1} || \hat{m}_{j} - m_{0, j} ||^2 \bigg) + o_{\mathbb{P}} (n^{-1 / 2}),
\end{aligned}
\end{equation}
where $ \Psi^* $ is the probability limit of $ \Psi^{\beta_n} $ defined in \eqref{eq:NO-score} as $ \beta_n \rightarrow \infty $.
\end{theorem}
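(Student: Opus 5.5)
We decompose the error of the empirical orthogonal criterion into four pieces and bound each separately:
\begin{align*}
\hat R_{\sup}^{\beta_n,\no}(\pi;\hat\theta)-R_{\sup}(\pi;\theta_0)
={}&(\Pn-\EP)\Psi^*(O,\pi;\theta_0)\\
&+(\Pn-\EP)\{\Psi^{\beta_n}(O,\pi;\hat\theta)-\Psi^*(O,\pi;\theta_0)\}\\
&+\{R_{\sup}^{\beta_n,\no}(\pi;\hat\theta)-R_{\sup}^{\beta_n,\no}(\pi;\theta_0)\}\\
&+\{R_{\sup}^{\beta_n}(\pi;\theta_0)-R_{\sup}(\pi;\theta_0)\}.
\end{align*}
Here $\Psi^*$ is the score obtained by replacing softmax weights and gradients with the indicators of the (a.s.\ unique, by Assumption~\ref{assump:margin-cond} at $t=0$) argmax branches, as in the direct influence function score.

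\emph{Fourth term (smoothing bias).} We sharpen Proposition~\ref{prop:approx-error}. Identity \eqref{eq:smoothing-error-identity} writes $\psi^{\beta_n}-\psi$ as a signed sum of the smoothing errors $\Delta^{\beta_n}(v)$ for $v\in\{\delta_L,(\delta_L,0),-\delta_U,(-\delta_U,0)\}$. Lemma~\ref{lem:approx-error}, applied under Assumption~\ref{assump:margin-cond}, gives $\EP[\Delta^{\beta_n}(v)]\lesssim\beta_n^{-(1+\delta)}$. Since $u^b-u^n$ and $\pi$ are bounded, this yields the $O(\beta_n^{-(1+\delta)})$ term.

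\emph{Third term (nuisance bias).} We use a von Mises expansion along the path $\theta_t=\theta_0+t(\hat\theta-\theta_0)$, conditional on the independent nuisance sample. The first derivative vanishes by Proposition~\ref{prop:orthogonal-score}. Rather than a generic Taylor bound, we compute the remainder exactly. Conditioning on $X$ gives
\[
\EP[T_{a,j}(O;\hat\theta)\mid X]=\frac{e_0(a,X)}{\hat e(a,X)}\{m_{0,j}(a,X)-\hat m_j(a,X)\}.
\]
Hence $\EP[\Psi^{\beta_n}(\hat\theta)\mid X]-\EP[\Psi^{\beta_n}(\theta_0)\mid X]$ splits into two parts:
\begin{enumerate}[label=(\alph*)]
\item $\psi^{\beta_n}(\hat m)-\psi^{\beta_n}(m_0)-\nabla\psi^{\beta_n}(\hat m)\cdot(\hat\delta-\delta_0)$;
\item $\nabla\psi^{\beta_n}(\hat m)\cdot\bigl[\{e_0/\hat e-1\}(m_0-\hat m)\bigr]$ aggregated over the $L_j$ and $U_j$ sums.
\end{enumerate}
Part (b) is bounded by Cauchy--Schwarz, using the uniform gradient bound of Lemma~\ref{lem:sm-property}(ii) and $\hat e\in[\epsilon,1-\epsilon]$, giving $O_{\mathbb P}(\|\hat e-e_0\|\max_j\|\hat m_j-m_{0,j}\|)$. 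Part (a) is a second-order Taylor remainder; the Hessian bound of Lemma~\ref{lem:sm-property}(iv) gives $O_{\mathbb P}(\beta_n\max_j\|\hat m_j-m_{0,j}\|^2)$, since the $\delta$'s are linear in $m$ with fixed coefficients.

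\emph{Second term (empirical process).} Conditionally on the training sample, the summand is a centered i.i.d.\ average, so by Chebyshev it is $O_{\mathbb P}(n^{-1/2}\|\Psi^{\beta_n}(\hat\theta)-\Psi^*(\theta_0)\|)$. It then suffices to show $\|\Psi^{\beta_n}(\hat\theta)-\Psi^*(\theta_0)\|=o_{\mathbb P}(1)$, which we split into two pieces.
\begin{enumerate}[label=(\roman*)]
\item $\Psi^{\beta_n}(\hat\theta)-\Psi^{\beta_n}(\theta_0)$. This is controlled by the Lipschitz property of Lemma~\ref{lem:sm-property}(iii), the boundedness of $\hat m$, $1/\hat e$ and the $T_{a,j}$, and the $L_2$ consistency of $\hat e,\hat m$. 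One caveat: the gradient weights $\partial G^{\beta_n}/\partial g_k$ are Lipschitz only with constant $\asymp\beta_n$. We handle this by bounding the weight difference by $\min\{2(1+\log K),\,C\beta_n|\hat\delta-\delta_0|\}$ and splitting on whether the margin gap exceeds a level $t_n$ with $t_n\to0$ and $\beta_n t_n\to\infty$. On $\{\text{gap}>t_n\}$, the softmax weights at $\hat m$ and at $m_0$ both concentrate on the same branch once $|\hat\delta-\delta_0|<t_n/2$, an event whose complement has vanishing probability by Markov's inequality. The set $\{\text{gap}\le t_n\}$ has probability at most $C_0t_n^\delta\to0$ by Assumption~\ref{assump:margin-cond}.
\item $\Psi^{\beta_n}(\theta_0)-\Psi^*(\theta_0)$. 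This tends to zero in $L_2$ by dominated convergence, because the softmax gradients converge almost surely to the argmax indicators when the argmax is unique.
\end{enumerate}

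We expect this last step, the $L_2$ convergence of the $\beta_n$-dependent gradient weights under estimated nuisances, to be the main obstacle. The plan is the margin-based splitting above. If it fails to give $o_{\mathbb P}(1)$ cleanly, we will absorb the residual into the stated $o_{\mathbb P}(n^{-1/2})$ remainder.
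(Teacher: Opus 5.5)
Your proposal is correct and is essentially the paper's own proof. The smoothing bias is handled through identity \eqref{eq:smoothing-error-identity} and Lemma~\ref{lem:approx-error}. The nuisance bias uses the same exact conditional-mean computation, split into a Hessian remainder of order $\beta_n\max_j\|\hat m_j-m_{0,j}\|^2$ and a propensity--outcome product term. The empirical-process part uses the same conditional Chebyshev bound, margin splitting for the gradient weights, and dominated convergence for $\Psi^{\beta_n}(\theta_0)\to\Psi^*(\theta_0)$; your second term just merges the paper's Term (I) with the $\Psi^{\beta_n}-\Psi^*$ part of its Term (II).

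Two small repairs are needed. With a shrinking threshold $t_n$, Markov's inequality makes $\mathbb{P}(\|\hat\delta-\delta_0\|_\infty\geq t_n/4)$ vanish only if also $\max_j\|\hat m_j-m_{0,j}\|=o_{\mathbb{P}}(t_n)$. This can be arranged, because any $o_{\mathbb{P}}(1)$ sequence is $o_{\mathbb{P}}(\epsilon_n)$ for some deterministic $\epsilon_n\to0$; alternatively, use the paper's fixed-$t$ argument and then let $t\downarrow0$. Separately, your fallback of absorbing a non-vanishing residual into $o_{\mathbb{P}}(n^{-1/2})$ is not valid, since that term would then only be $O_{\mathbb{P}}(n^{-1/2})$.
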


\begin{remark}[Sufficient conditions for root-$n$ consistency]
\label{rem:alternative-rates}
Under the conditions of Theorem~\ref{thm:asymp-converg},
regret estimation for a fixed policy is root-$n$ consistent
if $\beta_n \gtrsim n^{1/\{2(1+\delta)\}}$ and
\[
\|\hat e-e_0\|
\max_{j\in\mathcal J}\|\hat m_j-m_{0,j}\|
=o_{\mathbb P}(n^{-1/2}),
~~\max_{j\in\mathcal J}\|\hat m_j-m_{0,j}\|
=o_{\mathbb P}(n^{-1/4}\beta_n^{-1/2}).
\]
These conditions make the smoothing approximation error $O(n^{-1/2})$ and both nuisance estimation remainders $o_{\mathbb P}(n^{-1/2})$. They do not require the nuisance estimators to converge at polynomial rates.
\end{remark}

\medskip
\noindent\textbf{Proof of Theorem~\ref{thm:asymp-converg}.}\par

First, we do the following decomposition:
\begin{equation}\label{eq:regret-decomp}
\begin{aligned}
&\hat{R}_{\sup}^{\beta_n, \no} (\pi; \hat{\theta}) - R_{\sup} (\pi; \theta_0) \\ ={}& \left( \hat{R}_{\sup}^{\beta_n, \no} (\pi; \hat{\theta}) - R_{\sup}^{\beta_n, \no} (\pi; \theta_0) \right) + \left( R_{\sup}^{\beta_n, \no} (\pi; \theta_0) - R_{\sup} (\pi; \theta_0) \right).
\end{aligned}
\end{equation}
The second term in \eqref{eq:regret-decomp} is the approximation error via smoothing, which can be controlled by the growing rate of $ \beta_n $.
{At the true nuisance functions, $\EP[\phi^{\beta_n}(O;\theta_0)\mid X]=0$. Hence
\begin{equation}\label{eq:noone-6}
\begin{aligned}
&R_{\sup}^{\beta_n,\no}(\pi;\theta_0)-R_{\sup}(\pi;\theta_0)\\
=\ & -\EP\Big[\{u^b(X)-u^n(X)\}\pi(X)\{\psi^{\beta_n}(X;m_0)-\psi(X;m_0)\}\Big].
\end{aligned}
\end{equation}
Applying the triangle inequality to \eqref{eq:smoothing-error-identity} and using $0\leq\pi\leq1$, the absolute difference is bounded by $\|u^b-u^n\|_\infty$ times the sum of the expected approximation errors $\Delta^{\beta_n}(v)$ for the four vectors
\[
v\in\{\delta_L,(\delta_L,0),-\delta_U,(-\delta_U,0)\}.
\]
Assumption~\ref{assump:margin-cond} and Lemma~\ref{lem:approx-error} bound each expectation by a constant times $\beta_n^{-(1+\delta)}$. Boundedness of $u^b-u^n$ therefore gives the same rate for the weighted regret difference.}

Now we decompose $ \hat{R}_{\sup}^{\beta_n, \no} (\pi; \hat{\theta}) - R_{\sup}^{\beta_n, \no} (\pi; \theta_0) $ as
\begin{equation}\label{eq:regret-decomp-1}
\begin{aligned}
& \hat{R}_{\sup}^{\beta_n, \no} (\pi; \hat{\theta}) - R_{\sup}^{\beta_n, \no} (\pi; \theta_0) \\
= \ & \Pn \Psi^{\beta_n} (O, \pi; \hat{\theta}) - \EP \Psi^{\beta_n} (O, \pi; \theta_0) \\
= \ & (\Pn - \EP) \{ \Psi^{\beta_n} (O, \pi; \hat{\theta}) - \Psi^{\beta_n} (O, \pi; \theta_0) \} + (\Pn - \EP) \{ \Psi^{\beta_n} (O, \pi; \theta_0) \} \\
& + \EP \{ \Psi^{\beta_n} (O, \pi; \hat{\theta}) - \Psi^{\beta_n} (O, \pi; \theta_0) \} \\
:= \ & \text{(I)} + \text{(II)} + \text{(III)}.
\end{aligned}
\end{equation}
Next we show that the $ \text{(I)} = o_{\mathbb{P}} (n^{-1 / 2}) $, Term (II) converges to $ (\Pn - \EP) \{ \Psi^* (O, \pi; \theta_0) \} $, and
\begin{equation*}
\text{(III)} = O_{\mathbb{P}} \bigg( || \hat{e} - e_0 || \cdot \max_{0 \leq j \leq J - 1} || \hat{m}_{j} - m_{0, j} || + \beta_n \max_{0 \leq j \leq J - 1} || \hat{m}_{j} - m_{0, j} ||^2 \bigg)
\end{equation*}
under the conditions of Theorem~\ref{thm:asymp-converg}.

\medskip
\noindent\textbf{Term (I).}
By Lemma 2 in \cite{kennedy2020sharp}, the first term in \eqref{eq:regret-decomp-1} is $ o_{\mathbb{P}} (n^{-1 / 2}) $ if
\begin{equation*}
\EP \left[ \{ \Psi^{\beta_n} (O, \pi; \hat{\theta}) - \Psi^{\beta_n} (O, \pi; \theta_0) \}^2 \right] = o_{\mathbb{P}} (1).
\end{equation*}
In fact,
\begin{align}
& \EP \left[ \left\{ \Psi^{\beta_n} (O, \pi; \hat{\theta}) - \Psi^{\beta_n} (O, \pi; \theta_0) \right\}^2 \right] = \| \Psi^{\beta_n} (O, \pi; \hat{\theta}) - \Psi^{\beta_n} (O, \pi; \theta_0) \|^2 \notag \\
{\lesssim} \ & 2 \left\{ \| \psi^{\beta_n} (X; \hat{m}) - \psi^{\beta_n} (X; m_0) \|^2 + \| \phi^{\beta_n} (O; \hat{\theta}) - \phi^{\beta_n} (O; \theta_0) \|^2 \right\}. \label{eq:Term1-1}
\end{align}
The first term of \eqref{eq:Term1-1} can be upper bounded using the {Lipschitz} property of smoothed maximum function,
{
\begin{align*}
& \|\psi^{\beta_n}(X;\hat m)-\psi^{\beta_n}(X;m_0)\|^2\\
\lesssim{}&\EP\Big\{\|\delta_L(X;\hat m)-\delta_L(X;m_0)\|_2^2
+\|\delta_U(X;\hat m)-\delta_U(X;m_0)\|_2^2\Big\}\\
={}&\sum_{j=0}^{J-1}\Big\{\|\delta_{L,j}(X;\hat m)-\delta_{L,j}(X;m_0)\|^2
+\|\delta_{U,j}(X;\hat m)-\delta_{U,j}(X;m_0)\|^2\Big\}\\
\lesssim{}&\sum_{j=0}^{J-1}\|\hat m_j-m_{0,j}\|^2=o_{\mathbb P}(1).
\end{align*}
The first inequality uses the Lipschitz bound in Lemma~\ref{lem:sm-property}, and the equality expands the squared Euclidean norms.
The final inequality follows from the affine definitions of the bound components and Cauchy--Schwarz for finite sums, using positivity of $e_0$ to compare treatment-specific and observed-data norms.
The $o_{\mathbb P}(1)$ conclusion follows from nuisance consistency and fixed $J$.}
{For the second term, write $\widehat D_{L,j}=\partial\psi_L^{\beta_n}(X;\hat m)/\partial\delta_{L,j}$ and $D_{L,j,0}=\partial\psi_L^{\beta_n}(X;m_0)/\partial\delta_{L,j}$, with analogous notation for the upper bound component. The required decomposition is
\begin{align*}
&\widehat D_{L,j}L_j(O;\hat\theta)-D_{L,j,0}L_j(O;\theta_0)\\
= \ &\widehat D_{L,j}\{L_j(O;\hat\theta)-L_j(O;\theta_0)\}+(\widehat D_{L,j}-D_{L,j,0})L_j(O;\theta_0).
\end{align*}
Thus the residual corrections at $\theta_0$ are bounded, and Lemma~\ref{lem:sm-property} gives
\begin{align*}
&\|\phi^{\beta_n}(O;\hat\theta)-\phi^{\beta_n}(O;\theta_0)\|^2\\
\lesssim{}&\|\hat e-e_0\|^2+\sum_{j=0}^{J-1}\Big\{\|\hat m_j-m_{0,j}\|^2
+\|\widehat D_{L,j}-D_{L,j,0}\|^2
+\|\widehat D_{U,j}-D_{U,j,0}\|^2\Big\}.
\end{align*}
To control the gradient differences, consider any of the four vectors $v$ in Assumption~\ref{assump:margin-cond}, and let $\hat v$ be its estimated counterpart. For fixed $t>0$, on the event $v_{(1)}-v_{(2)}>t$ and $\|\hat v-v\|_\infty<t/4$, the two vectors have the same maximizer and gaps exceeding $t/2$. As $\beta_n\to\infty$, their smoothed gradients converge uniformly on this event to the same coordinate unit vector. The complementary event has probability at most $O(t^\delta)+o_{\mathbb P}(1)$ by the margin condition and nuisance consistency. Taking $n\to\infty$ and then $t\downarrow0$, the uniform gradient bound in Lemma~\ref{lem:sm-property} yields $\|\widehat D_{L,j}-D_{L,j,0}\|+\|\widehat D_{U,j}-D_{U,j,0}\|=o_{\mathbb P}(1)$. Consequently, the correction term difference is $o_{\mathbb P}(1)$ in $L_2(\mathbb P)$, proving the required bound for Term (I).}

\medskip
\noindent\textbf{Term (II).}
{Define the empirical process by $\Gn f:=\sqrt{n}\{\Pn f-\EP[f(O)]\}$.}
Write the second term in \eqref{eq:regret-decomp-1} as ${n^{-1/2}}\Gn \Psi^{\beta_n} (O, \pi; \theta_0) $.
Notice that
\begin{equation*}
\Gn \Psi^{\beta_n} (O, \pi; {\theta_0}) = \Gn \Psi^* (O, \pi; {\theta_0}) + \left\{ \Gn \Psi^{\beta_n} (O, \pi; {\theta_0}) - \Gn \Psi^* (O, \pi; {\theta_0}) \right\},
\end{equation*}
{Here, $\Psi^*$ is the limiting score of $\Psi^{\beta_n}$ as $\beta_n\rightarrow\infty$.}
Since
\begin{equation*}
\Gn \Psi^* (O, \pi; {\theta_0}) = \sqrt{n} \left\{ \Pn \Psi^* (O, \pi; {\theta_0}) - \EP \Psi^* (O, \pi; {\theta_0}) \right\},
\end{equation*}
it suffices to prove that $ | \Gn \Psi^{\beta_n} (O, \pi; {\theta_0}) - \Gn \Psi^* (O, \pi; {\theta_0}) | =\allowbreak  o_\mathbb{P} (1) $.
Furthermore, if we can show that $ \E_\mathbb{P} | \Gn \Psi^{\beta_n} (O, \pi; {\theta_0}) - \Gn \Psi^* (O, \pi; {\theta_0}) |^2 =\allowbreak  o (1) $, then for any $ \epsilon > 0 $,
{
\begin{align*}
&\mathbb P\left(\left|\Gn\Psi^{\beta_n}(O,\pi;\theta_0)-\Gn\Psi^*(O,\pi;\theta_0)\right|\geq\epsilon\right)\\ \leq\ &\epsilon^{-2}\EP\left|\Gn\Psi^{\beta_n}(O,\pi;\theta_0)-\Gn\Psi^*(O,\pi;\theta_0)\right|^2=o(1),
\end{align*}}
which completes the proof.
In fact, we have
{
\begin{align*}
&\left\|\Gn\Psi^{\beta_n}(O,\pi;\theta_0)-\Gn\Psi^*(O,\pi;\theta_0)\right\|^2\\
={}&n^{-1}\sum_{i=1}^n\var_{\mathbb P}\left[\Psi^{\beta_n}(O_i,\pi;\theta_0)-\Psi^*(O_i,\pi;\theta_0)\right]\\
\leq{}&n^{-1}\sum_{i=1}^n\EP\left[\{\Psi^{\beta_n}(O_i,\pi;\theta_0)-\Psi^*(O_i,\pi;\theta_0)\}^2\right]\\
={}&\EP\left[\{\Psi^{\beta_n}(O,\pi;\theta_0)-\Psi^*(O,\pi;\theta_0)\}^2\right]=o(1).
\end{align*}}
Together with the bounded true outcome probabilities, the maintained bounded utilities, and Lemma~\ref{lem:sm-property}, this gives an $L_\infty(\mathbb P)$ envelope for $\Psi^{\beta_n}(O,\pi;\theta_0)$ that is uniform for all sufficiently large $n$.
The margin condition gives almost-sure convergence to $\Psi^*(O,\pi;\theta_0)$, so dominated convergence proves the final $o(1)$ bound.

\medskip
\noindent\textbf{Term (III).}
All expectations in this part of the proof are conditional on the independent sample used to estimate the nuisance functions.
Write $h_{a,j}(X)=\hat{m}_j(a,X)-m_{0,j}(a,X)$, and let $h(X)$ collect these differences over $a\in\{0,1\}$ and $j\in\mathcal J$.
By the chain rule and \eqref{eq:dr-correction},
\begin{equation*}
\phi^{\beta_n}(O;\hat{\theta})
=\sum_{a=0}^1\sum_{j=0}^{J-1}
\frac{\partial\psi^{\beta_n}(X;\hat{m})}{\partial m_j(a,X)}
T_{a,j}(O;\hat{\theta}).
\end{equation*}
Since $\EP\{T_{a,j}(O;\hat{\theta})\mid X\}=-e_0(a,X)h_{a,j}(X)/\hat{e}(a,X)$, it follows that
\begin{equation*}
\EP\{\phi^{\beta_n}(O;\hat{\theta})\mid X\}
=-\sum_{a=0}^1\sum_{j=0}^{J-1}
\frac{\partial\psi^{\beta_n}(X;\hat{m})}{\partial m_j(a,X)}
\frac{e_0(a,X)}{\hat{e}(a,X)}h_{a,j}(X).
\end{equation*}
Define the remainder
\begin{equation*}
\mathcal R_n(X)
=\psi^{\beta_n}(X;\hat{m})-\psi^{\beta_n}(X;m_0)
-\nabla_m\psi^{\beta_n}(X;\hat{m})^{\T}h(X).
\end{equation*}
The correction has conditional mean zero at $\theta_0$. Thus, by \eqref{eq:NO-score},
\begin{equation*}
\begin{aligned}
\text{(III)}
&=-\EP\Bigg[\{u^b(X)-u^n(X)\}\pi(X)\Bigg\{\mathcal R_n(X)\\
&\qquad+\sum_{a=0}^1\sum_{j=0}^{J-1}
\frac{\partial\psi^{\beta_n}(X;\hat{m})}{\partial m_j(a,X)}
\frac{\hat{e}(a,X)-e_0(a,X)}{\hat{e}(a,X)}h_{a,j}(X)
\Bigg\}\Bigg].
\end{aligned}
\end{equation*}
For fixed $J$, Lemma~\ref{lem:sm-property} and the affine dependence of the bound components on $m$ imply that $\nabla_m\psi^{\beta_n}$ is uniformly bounded and $\|\nabla_m^2\psi^{\beta_n}\|_{op}\lesssim\beta_n$.
Taylor's theorem in integral form therefore gives
\begin{equation*}
\mathcal R_n(X)
=-\int_0^1 s\,h(X)^{\T}\nabla_m^2\psi^{\beta_n}(X;m_0+s h)h(X)\,\diff s,
\end{equation*}
so that $|\mathcal R_n(X)|\lesssim\beta_n\sum_{a=0}^1\sum_{j=0}^{J-1}|h_{a,j}(X)|^2$.
Using the gradient bound, positivity of $\hat{e}$, boundedness of the utility contrast, $0\leq\pi\leq1$, and the Cauchy--Schwarz inequality, we obtain
\begin{equation*}
|\text{(III)}|
\lesssim\beta_n\sum_{a=0}^1\sum_{j=0}^{J-1}\|h_{a,j}\|^2
+\|\hat{e}-e_0\|\sum_{a=0}^1\sum_{j=0}^{J-1}\|h_{a,j}\|.
\end{equation*}
Here $\|h_{a,j}\|^2=\EP\{h_{a,j}(X)^2\}$.
By positivity of $e_0$, these treatment-specific norms are bounded by constant multiples of the corresponding norms of $\hat{m}_j(A,X)-m_{0,j}(A,X)$.
Since $J$ is fixed,
\begin{equation*}
\text{(III)}
=O_{\mathbb P}\!\left(
\|\hat{e}-e_0\|\max_{0\leq j\leq J-1}\|\hat{m}_j-m_{0,j}\|
+\beta_n\max_{0\leq j\leq J-1}\|\hat{m}_j-m_{0,j}\|^2
\right),
\end{equation*}
which establishes the required bound.

{Combining the three terms in \eqref{eq:regret-decomp-1} gives the expansion
\begin{equation*}
\begin{aligned}
&\hat R_{\sup}^{\beta_n,\no}(\pi;\hat\theta)-R_{\sup}^{\beta_n,\no}(\pi;\theta_0)\\
={}&(\Pn-\EP)\Psi^*(O,\pi;\theta_0)+o_{\mathbb P}(n^{-1/2})\\
&+O_{\mathbb P}\left(\|\hat e-e_0\|\max_j\|\hat m_j-m_{0,j}\|
+\beta_n\max_j\|\hat m_j-m_{0,j}\|^2\right).
\end{aligned}
\end{equation*}
Adding the smoothing error bounded using \eqref{eq:noone-6} yields \eqref{eq:thm1}.}

\subsection{Proof of Theorem~\ref{cor:regret-converge}}
\begin{proof}
{If $\vc(\Pi_n)=0$, all policies agree pointwise and the excess regret is zero. Otherwise, $\vc(\Pi_n)\geq1$.}
Note that the excess worst-case regret can be decomposed into
\begin{align*}
&R_{\sup} (\hat{\pi}_n; \theta_0) - R_{\sup} (\pi_n^*; \theta_0) \\ 
={}& R_{\sup} (\hat{\pi}_n; \theta_0) - \hat{R}_{\sup}^{\beta_n, \no} (\hat{\pi}_n; \hat{\theta}) + \hat{R}_{\sup}^{\beta_n, \no} (\hat{\pi}_n; \hat{\theta}) - \hat{R}_{\sup}^{\beta_n, \no} (\pi_n^*; \hat{\theta}) \\
& + \hat{R}_{\sup}^{\beta_n, \no} (\pi_n^*; \hat{\theta}) - R_{\sup} (\pi_n^*; \theta_0) \\
\leq \ & 2 \sup_{\pi \in \Pi_n} | \hat{R}_{\sup}^{\beta_n, \no} (\pi; \hat{\theta}) - R_{\sup} (\pi; \theta_0) |,
\end{align*}
where we have used the fact that $ \hat{\pi}_n $ minimizes $ \hat{R}_{\sup}^{\beta_n, \no} (\pi; \hat{\theta}) $.

{Condition on the independent sample used to estimate $\hat\theta$. The score class $\{\Psi^{\beta_n}(\cdot,\pi;\hat\theta):\pi\in\Pi_n\}$ consists of binary policies multiplied by a fixed, uniformly bounded function, by the boundedness of $(\hat e,\hat m)$ and Lemma~\ref{lem:sm-property}. The conditional VC maximal inequality \citep{wainwright2019high} therefore gives
\[
\sup_{\pi\in\Pi_n}|(\Pn-\EP)\Psi^{\beta_n}(O,\pi;\hat\theta)|
=O_{\mathbb P}\!\left(\sqrt{\vc(\Pi_n)/n}\right).
\]
The smoothing error bound and the bound for Term (III) established in the proof of Theorem~\ref{thm:asymp-converg} hold uniformly in $\pi$, since $|\pi|\leq1$. Decomposing the estimation error into its centered empirical part and population bias gives
\begin{equation*}
\begin{aligned}
&\sup_{\pi\in\Pi_n}|\hat R_{\sup}^{\beta_n,\no}(\pi;\hat\theta)-R_{\sup}(\pi;\theta_0)|\\
={}&O_{\mathbb P}\!\left(\sqrt{\vc(\Pi_n)/n}+\beta_n^{-(1+\delta)}\right)\\
&+O_{\mathbb P}\!\left(\|\hat e-e_0\|\max_j\|\hat m_j-m_{0,j}\|
+\beta_n\max_j\|\hat m_j-m_{0,j}\|^2\right).
\end{aligned}
\end{equation*}
Combining this uniform bound with the preceding excess regret inequality proves the theorem.}
\end{proof}

\section{Simulation Details and Additional Results}\label{app:add-sim}

Two baseline covariates $ X = (X_1, X_2) $ are generated independently from $ \text{Unif} (-1, 1) \times \text{Unif} (-1, 1) $, and $ A \mid X \sim \text{Bernoulli} (e(X)) $, where $ e(X) = \min \{ \max \{ X_1^2, 0.1 \}, 0.9 \} $.
This data-generating process follows a similar setup to \cite{levis2025covariate}.
The ordinal outcome is generated by a multinomial logistic distribution specified as
\begin{equation*}
\begin{gathered} m_j (a, X) = \prob (Y_a = j \mid X) = \frac{ \exp ( \alpha_{a,j}^\T \tilde{X} ) }{ 1+\sum_{l = 1}^{J - 1} \exp(\alpha_{a,l}^\T \tilde{X}) }, 
\end{gathered}
\end{equation*}
for $a = 0, 1, ~ j = 0, 1, \ldots, J - 1,$ where $ \tilde{X} = (1, X^{\T})^{\T} $ is the covariate vector including an intercept, and elements in $ \alpha_{a,j} $ are constants in $ (-1, 1) $.
Let $ \alpha_{a,0} = 0 $ for $a=0,1$, with $j = 0$ set as the reference level.
The coefficients $ \alpha_{1,j} $, $j = 1, \ldots, 7$ for $ Y_1 $ are specified as
\begin{equation*}
\begin{aligned}
& \alpha_{1,0} = \begin{pmatrix} 0 \\ 0 \\ 0 \end{pmatrix}, ~~ \alpha_{1,1} = \begin{pmatrix} -0.4 \\ -0.3 \\ -0.1 \end{pmatrix}, ~~ \alpha_{1,2} = \begin{pmatrix} -0.35 \\ 0.9 \\ -0.6 \end{pmatrix}, ~~ \alpha_{1,3} = \begin{pmatrix} -0.3 \\ -0.7 \\ 0.2 \end{pmatrix}, \\
& \alpha_{1,4} = \begin{pmatrix} -0.25 \\ 0.7 \\ 0.5 \end{pmatrix}, ~~ \alpha_{1,5} = \begin{pmatrix} -0.2 \\ 0.3 \\ 0.7 \end{pmatrix}, ~~ \alpha_{1,6} = \begin{pmatrix} -0.15 \\ -0.4 \\ -0.1 \end{pmatrix}, ~~ \alpha_{1,7} = \begin{pmatrix} -0.1 \\ 0 \\ 0.2 \end{pmatrix}.
\end{aligned}
\end{equation*}
{For each $J\in\{3,5,8\}$, we use the coefficient vectors $\alpha_{1,j}$ with $j=0,\ldots,J-1$ from the displayed list.}
For $ Y_0 $, each coefficient vector $\alpha_{0,j}$ is identical to $\alpha_{1,j}$, except that the third element (i.e., the coefficient of \(X_2\)) takes the opposite sign.
For simplicity, we take the utility threshold defined in \eqref{eq:asym-shift} to be constant throughout the simulation study, i.e., $ C_u (x) \equiv C_u $ for all $ x \in \mathcal{X} $.

Figure~\ref{fig:rsup-j5} shows that, for $J=5$, the direct influence function and orthogonal smoothed estimators closely track each other and sharply reduce RMSE as the nuisance rate increases; they also tend to attain lower excess regret at moderate rates, although the plug-in estimators can remain competitive at the fastest rates.

\begin{figure}[htbp]
\centering
\includegraphics[width=0.92\textwidth]{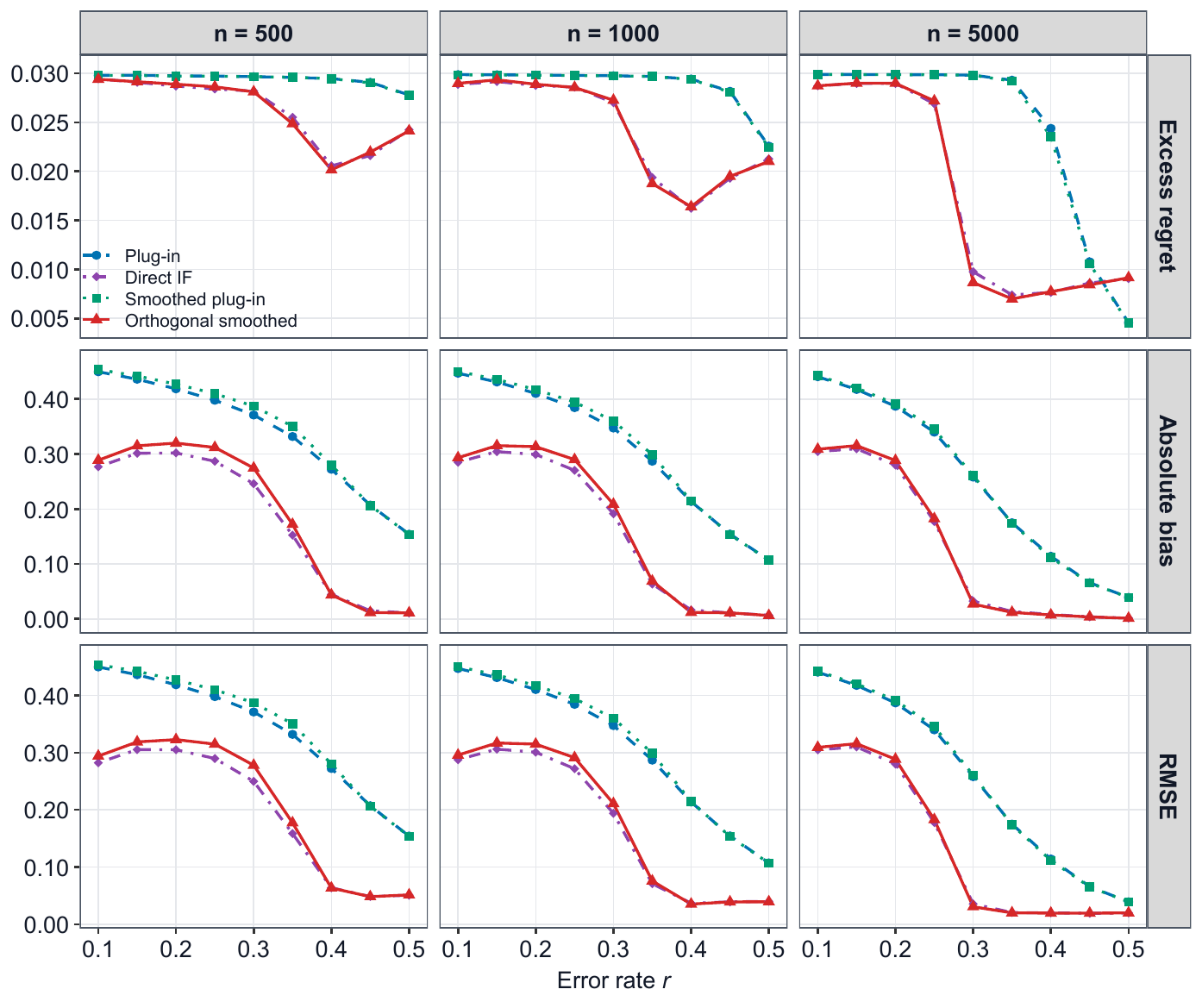}
\caption{Excess worst-case regret, absolute bias, and RMSE of regret estimation over 500 replications, under different sample sizes and error rates. Here $J=5$.}
\label{fig:rsup-j5}
\end{figure}

Figure~\ref{fig:rsup-j8} shows a clearer distinction between regret estimation and policy learning when $J=8$: the estimators with influence function corrections substantially reduce RMSE once the nuisance rate is moderate, whereas the plug-in estimators generally achieve lower excess regret.
Thus, improved accuracy in estimating worst-case regret does not necessarily translate into lower finite-sample excess regret for the learned policy.

\begin{figure}[htbp]
\centering
\includegraphics[width=0.92\textwidth]{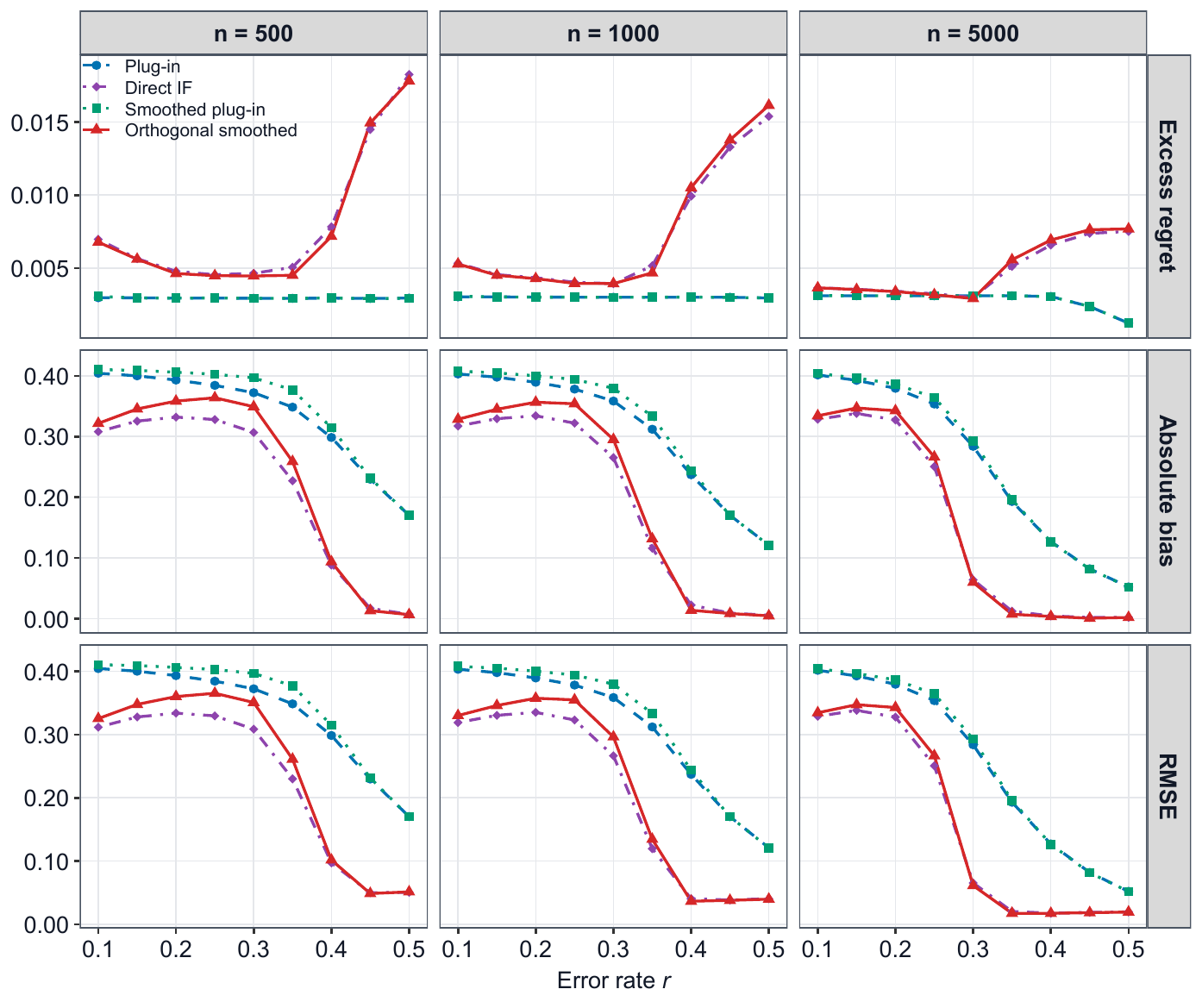}
\caption{Excess worst-case regret, absolute bias, and RMSE of regret estimation over 500 replications, under different sample sizes and error rates. Here $J=8$.}
\label{fig:rsup-j8}
\end{figure}

Tables~\ref{tab:rsup-j3}, \ref{tab:rsup-j5}, and~\ref{tab:rsup-j8} report the numerical results for $J=3$, 5, and 8, respectively. All three measures are multiplied by 100; absolute bias is the magnitude of the average signed estimation error, not the mean absolute error.

\begin{table}[p]
\renewcommand{\baselinestretch}{1}\small
\centering
\caption{{Absolute bias, RMSE, and mean excess worst-case regret ($\times 100$) for $J=3$ over 500 replications.}}
\label{tab:rsup-j3}
\begingroup
\scriptsize
\setlength{\tabcolsep}{1.4pt}
\renewcommand{\arraystretch}{1.12}
\begin{tabular}{@{}cr*{12}{r}@{}}
\toprule
& & \multicolumn{3}{c}{Direct plug-in} & \multicolumn{3}{c}{Direct IF} & \multicolumn{3}{c}{Smoothed plug-in} & \multicolumn{3}{c}{Orthogonal smoothed} \\
\cmidrule(lr){3-5}\cmidrule(lr){6-8}\cmidrule(lr){9-11}\cmidrule(lr){12-14}
$n$ & $r$ & Abs. bias & RMSE & Excess & Abs. bias & RMSE & Excess & Abs. bias & RMSE & Excess & Abs. bias & RMSE & Excess \\
\midrule
\multirow[c]{9}{*}{500} & 0.10 & 52.938 & 52.946 & 11.093 & 25.503 & 26.163 & 8.692 & 53.180 & 53.188 & 11.074 & 26.477 & 27.151 & 8.728 \\
 & 0.15 & 50.372 & 50.379 & 11.082 & 28.841 & 29.315 & 9.159 & 50.806 & 50.813 & 11.069 & 30.248 & 30.678 & 9.229 \\
 & 0.20 & 47.365 & 47.373 & 11.050 & 29.293 & 29.666 & 9.374 & 48.060 & 48.067 & 11.043 & 31.180 & 31.525 & 9.481 \\
 & 0.25 & 43.979 & 43.985 & 11.009 & 25.476 & 25.909 & 9.020 & 44.998 & 45.005 & 11.000 & 27.716 & 28.115 & 9.171 \\
 & 0.30 & 39.490 & 39.498 & 10.896 & 18.227 & 18.820 & 8.067 & 40.783 & 40.791 & 10.885 & 20.261 & 20.825 & 8.264 \\
 & 0.35 & 33.315 & 33.325 & 10.514 & 7.168 & 8.691 & 4.849 & 34.590 & 34.600 & 10.474 & 8.065 & 9.433 & 4.967 \\
 & 0.40 & 25.355 & 25.368 & 9.467 & 1.067 & 4.119 & 2.612 & 25.944 & 25.958 & 9.435 & 0.941 & 3.996 & 2.424 \\
 & 0.45 & 17.505 & 17.522 & 7.878 & 0.103 & 4.303 & 2.257 & 17.561 & 17.577 & 7.927 & 0.177 & 4.158 & 2.216 \\
 & 0.50 & 10.363 & 10.396 & 4.972 & 0.216 & 4.635 & 2.169 & 10.372 & 10.405 & 5.035 & 0.127 & 4.632 & 2.191 \\
\midrule
\multirow[c]{9}{*}{1000} & 0.10 & 52.554 & 52.558 & 11.136 & 26.881 & 27.189 & 8.901 & 52.755 & 52.759 & 11.124 & 27.707 & 28.008 & 8.940 \\
 & 0.15 & 49.516 & 49.520 & 11.126 & 30.190 & 30.410 & 9.505 & 49.891 & 49.894 & 11.121 & 31.292 & 31.509 & 9.536 \\
 & 0.20 & 46.068 & 46.072 & 11.102 & 28.660 & 28.855 & 9.559 & 46.686 & 46.690 & 11.097 & 30.198 & 30.380 & 9.657 \\
 & 0.25 & 41.734 & 41.737 & 11.053 & 22.965 & 23.230 & 8.974 & 42.647 & 42.651 & 11.048 & 24.976 & 25.217 & 9.186 \\
 & 0.30 & 35.749 & 35.753 & 10.785 & 10.763 & 11.449 & 5.893 & 36.782 & 36.786 & 10.763 & 11.939 & 12.551 & 6.288 \\
 & 0.35 & 27.232 & 27.238 & 9.796 & 1.574 & 3.417 & 1.950 & 28.061 & 28.067 & 9.751 & 1.697 & 3.426 & 1.800 \\
 & 0.40 & 18.405 & 18.414 & 8.157 & 0.105 & 3.051 & 1.442 & 18.606 & 18.615 & 8.172 & 0.036 & 3.110 & 1.319 \\
 & 0.45 & 10.352 & 10.371 & 4.980 & 0.168 & 3.200 & 1.376 & 10.439 & 10.458 & 5.153 & 0.122 & 3.258 & 1.368 \\
 & 0.50 & 4.532 & 4.554 & 1.735 & 0.233 & 3.392 & 1.381 & 4.463 & 4.487 & 1.722 & 0.223 & 3.455 & 1.363 \\
\midrule
\multirow[c]{9}{*}{5000} & 0.10 & 51.221 & 51.222 & 11.140 & 29.636 & 29.689 & 9.268 & 51.351 & 51.352 & 11.139 & 30.076 & 30.128 & 9.269 \\
 & 0.15 & 47.224 & 47.225 & 11.135 & 30.426 & 30.477 & 9.779 & 47.473 & 47.473 & 11.133 & 31.117 & 31.166 & 9.827 \\
 & 0.20 & 42.191 & 42.192 & 11.112 & 24.276 & 24.338 & 9.260 & 42.629 & 42.629 & 11.111 & 25.240 & 25.299 & 9.335 \\
 & 0.25 & 34.597 & 34.598 & 10.758 & 8.810 & 9.081 & 4.495 & 35.125 & 35.126 & 10.752 & 9.555 & 9.787 & 4.945 \\
 & 0.30 & 23.631 & 23.632 & 9.226 & 0.230 & 1.270 & 0.663 & 23.891 & 23.893 & 9.198 & 0.118 & 1.248 & 0.620 \\
 & 0.35 & 13.133 & 13.138 & 6.252 & 0.174 & 1.384 & 0.616 & 13.442 & 13.447 & 6.546 & 0.298 & 1.369 & 0.605 \\
 & 0.40 & 5.160 & 5.169 & 2.100 & 0.149 & 1.445 & 0.633 & 4.950 & 4.962 & 2.055 & 0.219 & 1.448 & 0.640 \\
 & 0.45 & 2.011 & 2.013 & 0.396 & 0.086 & 1.497 & 0.666 & 1.921 & 1.923 & 0.349 & 0.102 & 1.492 & 0.649 \\
 & 0.50 & 1.081 & 1.081 & 0.105 & 0.029 & 1.538 & 0.733 & 1.073 & 1.073 & 0.102 & 0.037 & 1.512 & 0.736 \\
\bottomrule
\end{tabular}
\endgroup
\end{table}

\begin{table}[p]
\renewcommand{\baselinestretch}{1}\small
\centering
\caption{{Absolute bias, RMSE, and mean excess worst-case regret ($\times 100$) for $J=5$ over 500 replications.}}
\label{tab:rsup-j5}
\begingroup
\scriptsize
\setlength{\tabcolsep}{1.4pt}
\renewcommand{\arraystretch}{1.12}
\begin{tabular}{@{}cr*{12}{r}@{}}
\toprule
& & \multicolumn{3}{c}{Direct plug-in} & \multicolumn{3}{c}{Direct IF} & \multicolumn{3}{c}{Smoothed plug-in} & \multicolumn{3}{c}{Orthogonal smoothed} \\
\cmidrule(lr){3-5}\cmidrule(lr){6-8}\cmidrule(lr){9-11}\cmidrule(lr){12-14}
$n$ & $r$ & Abs. bias & RMSE & Excess & Abs. bias & RMSE & Excess & Abs. bias & RMSE & Excess & Abs. bias & RMSE & Excess \\
\midrule
\multirow[c]{9}{*}{500} & 0.10 & 44.964 & 44.971 & 2.979 & 27.676 & 28.222 & 2.943 & 45.361 & 45.368 & 2.977 & 28.868 & 29.390 & 2.938 \\
 & 0.15 & 43.606 & 43.612 & 2.980 & 30.132 & 30.522 & 2.908 & 44.202 & 44.208 & 2.979 & 31.491 & 31.865 & 2.914 \\
 & 0.20 & 41.853 & 41.859 & 2.973 & 30.191 & 30.496 & 2.869 & 42.704 & 42.709 & 2.971 & 31.983 & 32.272 & 2.888 \\
 & 0.25 & 39.779 & 39.783 & 2.971 & 28.682 & 28.971 & 2.840 & 40.980 & 40.984 & 2.971 & 31.195 & 31.481 & 2.861 \\
 & 0.30 & 37.095 & 37.099 & 2.965 & 24.596 & 24.973 & 2.818 & 38.705 & 38.709 & 2.965 & 27.451 & 27.788 & 2.811 \\
 & 0.35 & 33.184 & 33.188 & 2.958 & 15.242 & 15.852 & 2.551 & 35.036 & 35.040 & 2.960 & 17.205 & 17.728 & 2.485 \\
 & 0.40 & 27.226 & 27.231 & 2.945 & 4.463 & 6.377 & 2.056 & 28.002 & 28.007 & 2.945 & 4.374 & 6.336 & 2.015 \\
 & 0.45 & 20.706 & 20.709 & 2.907 & 1.489 & 4.840 & 2.161 & 20.624 & 20.627 & 2.902 & 1.148 & 4.825 & 2.195 \\
 & 0.50 & 15.413 & 15.416 & 2.779 & 1.075 & 5.078 & 2.424 & 15.366 & 15.369 & 2.776 & 1.109 & 5.132 & 2.413 \\
\midrule
\multirow[c]{9}{*}{1000} & 0.10 & 44.695 & 44.698 & 2.986 & 28.511 & 28.763 & 2.886 & 45.008 & 45.012 & 2.982 & 29.321 & 29.568 & 2.897 \\
 & 0.15 & 43.100 & 43.103 & 2.985 & 30.440 & 30.601 & 2.915 & 43.577 & 43.580 & 2.984 & 31.526 & 31.680 & 2.933 \\
 & 0.20 & 41.015 & 41.018 & 2.981 & 29.912 & 30.064 & 2.876 & 41.717 & 41.719 & 2.981 & 31.333 & 31.480 & 2.888 \\
 & 0.25 & 38.435 & 38.437 & 2.979 & 27.032 & 27.182 & 2.859 & 39.445 & 39.447 & 2.977 & 28.984 & 29.122 & 2.855 \\
 & 0.30 & 34.724 & 34.726 & 2.975 & 19.133 & 19.384 & 2.700 & 36.011 & 36.013 & 2.974 & 20.865 & 21.091 & 2.725 \\
 & 0.35 & 28.684 & 28.686 & 2.967 & 6.351 & 7.080 & 1.936 & 29.880 & 29.883 & 2.967 & 6.886 & 7.565 & 1.875 \\
 & 0.40 & 21.363 & 21.365 & 2.941 & 1.636 & 3.676 & 1.622 & 21.459 & 21.461 & 2.938 & 1.186 & 3.526 & 1.639 \\
 & 0.45 & 15.435 & 15.437 & 2.816 & 1.081 & 3.911 & 1.929 & 15.326 & 15.327 & 2.807 & 1.111 & 3.909 & 1.948 \\
 & 0.50 & 10.718 & 10.722 & 2.257 & 0.677 & 4.016 & 2.127 & 10.680 & 10.684 & 2.243 & 0.620 & 3.930 & 2.101 \\
\midrule
\multirow[c]{9}{*}{5000} & 0.10 & 44.058 & 44.058 & 2.988 & 30.448 & 30.489 & 2.872 & 44.235 & 44.236 & 2.987 & 30.872 & 30.913 & 2.872 \\
 & 0.15 & 41.742 & 41.743 & 2.987 & 30.958 & 30.989 & 2.896 & 42.018 & 42.019 & 2.987 & 31.529 & 31.561 & 2.899 \\
 & 0.20 & 38.687 & 38.687 & 2.986 & 27.985 & 28.014 & 2.896 & 39.113 & 39.113 & 2.986 & 28.818 & 28.847 & 2.897 \\
 & 0.25 & 33.977 & 33.977 & 2.985 & 17.712 & 17.769 & 2.689 & 34.540 & 34.541 & 2.985 & 18.239 & 18.294 & 2.718 \\
 & 0.30 & 25.796 & 25.797 & 2.981 & 3.240 & 3.566 & 0.976 & 26.071 & 26.072 & 2.980 & 2.693 & 3.060 & 0.865 \\
 & 0.35 & 17.495 & 17.496 & 2.930 & 1.346 & 2.052 & 0.739 & 17.374 & 17.375 & 2.919 & 1.195 & 1.997 & 0.699 \\
 & 0.40 & 11.404 & 11.405 & 2.438 & 0.816 & 1.998 & 0.764 & 11.158 & 11.160 & 2.353 & 0.720 & 1.955 & 0.773 \\
 & 0.45 & 6.589 & 6.590 & 1.076 & 0.379 & 1.914 & 0.861 & 6.528 & 6.529 & 1.055 & 0.360 & 1.943 & 0.842 \\
 & 0.50 & 3.917 & 3.918 & 0.455 & 0.144 & 1.980 & 0.906 & 3.908 & 3.909 & 0.451 & 0.129 & 1.985 & 0.915 \\
\bottomrule
\end{tabular}
\endgroup
\end{table}

\begin{table}[p]
\renewcommand{\baselinestretch}{1}\small
\centering
\caption{{Absolute bias, RMSE, and mean excess worst-case regret ($\times 100$) for $J=8$ over 500 replications.}}
\label{tab:rsup-j8}
\begingroup
\scriptsize
\setlength{\tabcolsep}{1.4pt}
\renewcommand{\arraystretch}{1.12}
\begin{tabular}{@{}cr*{12}{r}@{}}
\toprule
& & \multicolumn{3}{c}{Direct plug-in} & \multicolumn{3}{c}{Direct IF} & \multicolumn{3}{c}{Smoothed plug-in} & \multicolumn{3}{c}{Orthogonal smoothed} \\
\cmidrule(lr){3-5}\cmidrule(lr){6-8}\cmidrule(lr){9-11}\cmidrule(lr){12-14}
$n$ & $r$ & Abs. bias & RMSE & Excess & Abs. bias & RMSE & Excess & Abs. bias & RMSE & Excess & Abs. bias & RMSE & Excess \\
\midrule
\multirow[c]{9}{*}{500} & 0.10 & 40.422 & 40.428 & 0.297 & 30.793 & 31.160 & 0.697 & 41.031 & 41.037 & 0.305 & 32.164 & 32.517 & 0.677 \\
 & 0.15 & 39.998 & 40.003 & 0.296 & 32.534 & 32.771 & 0.564 & 40.898 & 40.904 & 0.295 & 34.540 & 34.759 & 0.561 \\
 & 0.20 & 39.305 & 39.309 & 0.294 & 33.201 & 33.381 & 0.478 & 40.596 & 40.601 & 0.294 & 35.834 & 35.997 & 0.462 \\
 & 0.25 & 38.427 & 38.432 & 0.294 & 32.797 & 32.945 & 0.454 & 40.261 & 40.265 & 0.295 & 36.381 & 36.516 & 0.448 \\
 & 0.30 & 37.217 & 37.221 & 0.293 & 30.665 & 30.848 & 0.464 & 39.668 & 39.672 & 0.293 & 34.914 & 35.064 & 0.445 \\
 & 0.35 & 34.838 & 34.842 & 0.292 & 22.713 & 22.996 & 0.505 & 37.675 & 37.680 & 0.293 & 25.883 & 26.108 & 0.450 \\
 & 0.40 & 29.845 & 29.849 & 0.294 & 8.821 & 9.721 & 0.783 & 31.474 & 31.478 & 0.294 & 9.361 & 10.173 & 0.717 \\
 & 0.45 & 22.966 & 22.969 & 0.292 & 1.691 & 5.007 & 1.450 & 23.148 & 23.151 & 0.291 & 1.306 & 4.880 & 1.494 \\
 & 0.50 & 17.032 & 17.035 & 0.294 & 0.701 & 5.063 & 1.827 & 17.032 & 17.034 & 0.292 & 0.655 & 5.134 & 1.782 \\
\midrule
\multirow[c]{9}{*}{1000} & 0.10 & 40.306 & 40.309 & 0.302 & 31.735 & 31.896 & 0.533 & 40.771 & 40.773 & 0.306 & 32.841 & 33.000 & 0.528 \\
 & 0.15 & 39.777 & 39.779 & 0.302 & 32.936 & 33.035 & 0.455 & 40.484 & 40.487 & 0.302 & 34.491 & 34.582 & 0.450 \\
 & 0.20 & 38.936 & 38.939 & 0.300 & 33.419 & 33.503 & 0.433 & 40.006 & 40.008 & 0.300 & 35.651 & 35.732 & 0.428 \\
 & 0.25 & 37.814 & 37.816 & 0.300 & 32.199 & 32.293 & 0.403 & 39.401 & 39.403 & 0.300 & 35.392 & 35.473 & 0.395 \\
 & 0.30 & 35.841 & 35.843 & 0.300 & 26.524 & 26.640 & 0.393 & 37.948 & 37.950 & 0.299 & 29.516 & 29.623 & 0.393 \\
 & 0.35 & 31.197 & 31.199 & 0.299 & 11.590 & 11.966 & 0.515 & 33.332 & 33.335 & 0.300 & 13.153 & 13.447 & 0.467 \\
 & 0.40 & 23.703 & 23.705 & 0.300 & 2.230 & 4.058 & 0.993 & 24.372 & 24.373 & 0.301 & 1.374 & 3.659 & 1.051 \\
 & 0.45 & 17.020 & 17.022 & 0.300 & 0.907 & 3.851 & 1.329 & 17.019 & 17.020 & 0.299 & 0.836 & 3.799 & 1.378 \\
 & 0.50 & 12.096 & 12.097 & 0.295 & 0.556 & 4.065 & 1.539 & 12.077 & 12.079 & 0.295 & 0.458 & 3.989 & 1.615 \\
\midrule
\multirow[c]{9}{*}{5000} & 0.10 & 40.149 & 40.149 & 0.310 & 32.844 & 32.870 & 0.365 & 40.392 & 40.393 & 0.311 & 33.413 & 33.439 & 0.365 \\
 & 0.15 & 39.254 & 39.254 & 0.310 & 33.801 & 33.818 & 0.354 & 39.654 & 39.654 & 0.310 & 34.700 & 34.716 & 0.353 \\
 & 0.20 & 37.945 & 37.946 & 0.310 & 32.736 & 32.750 & 0.342 & 38.634 & 38.635 & 0.310 & 34.251 & 34.266 & 0.338 \\
 & 0.25 & 35.359 & 35.359 & 0.310 & 25.028 & 25.053 & 0.323 & 36.432 & 36.432 & 0.310 & 26.632 & 26.654 & 0.317 \\
 & 0.30 & 28.384 & 28.385 & 0.310 & 6.392 & 6.547 & 0.298 & 29.278 & 29.279 & 0.310 & 5.974 & 6.136 & 0.291 \\
 & 0.35 & 19.301 & 19.301 & 0.310 & 1.217 & 2.046 & 0.511 & 19.581 & 19.581 & 0.310 & 0.719 & 1.751 & 0.556 \\
 & 0.40 & 12.706 & 12.707 & 0.305 & 0.439 & 1.772 & 0.658 & 12.643 & 12.643 & 0.304 & 0.359 & 1.750 & 0.692 \\
 & 0.45 & 8.205 & 8.205 & 0.236 & 0.187 & 1.893 & 0.738 & 8.154 & 8.154 & 0.237 & 0.063 & 1.854 & 0.763 \\
 & 0.50 & 5.214 & 5.214 & 0.122 & 0.206 & 1.977 & 0.751 & 5.198 & 5.198 & 0.123 & 0.189 & 1.950 & 0.768 \\
\bottomrule
\end{tabular}
\endgroup
\end{table}

Additional sensitivity results for the treated proportion at the smaller and larger sample sizes are displayed in Figures~\ref{fig:trpr-n500} and~\ref{fig:trpr-n5000}, respectively.

\begin{figure}[htbp]
\centering
\includegraphics[width=0.96\textwidth]{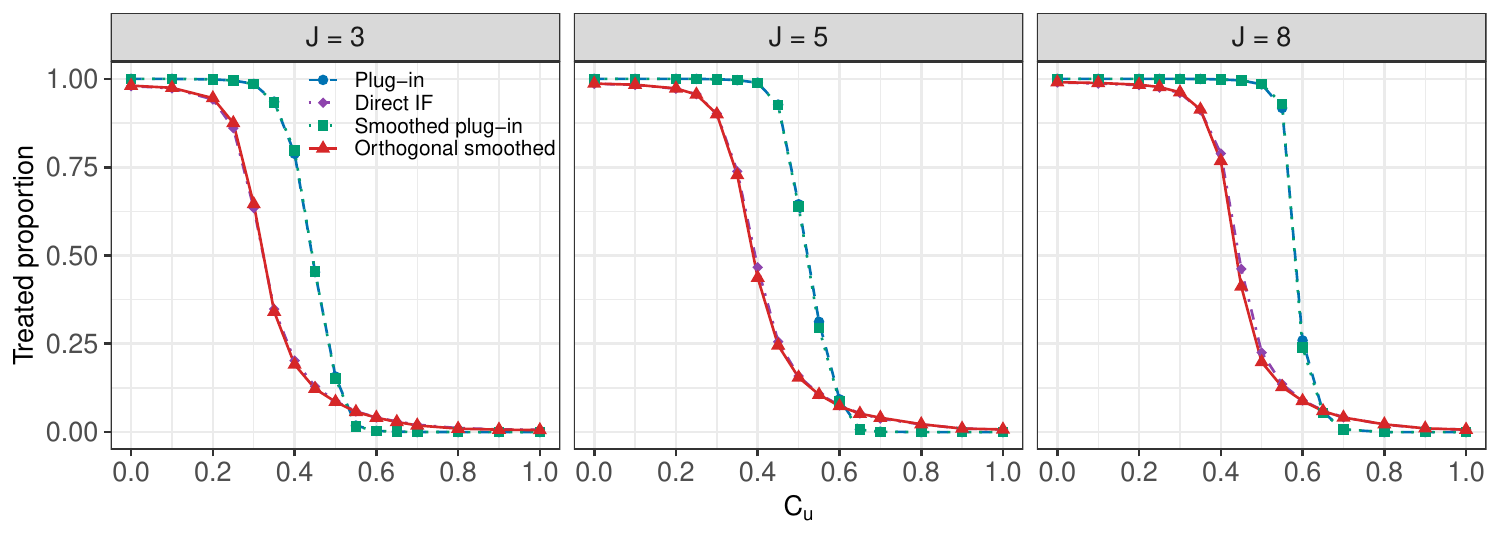}
\caption{Mean treated proportion over 500 replications as the utility threshold $C_u$ varies, with sample size $n=500$.
Columns correspond to $J=3$, 5, and 8 outcome levels.}
\label{fig:trpr-n500}
\end{figure}

\begin{figure}[htbp]
\centering
\includegraphics[width=0.96\textwidth]{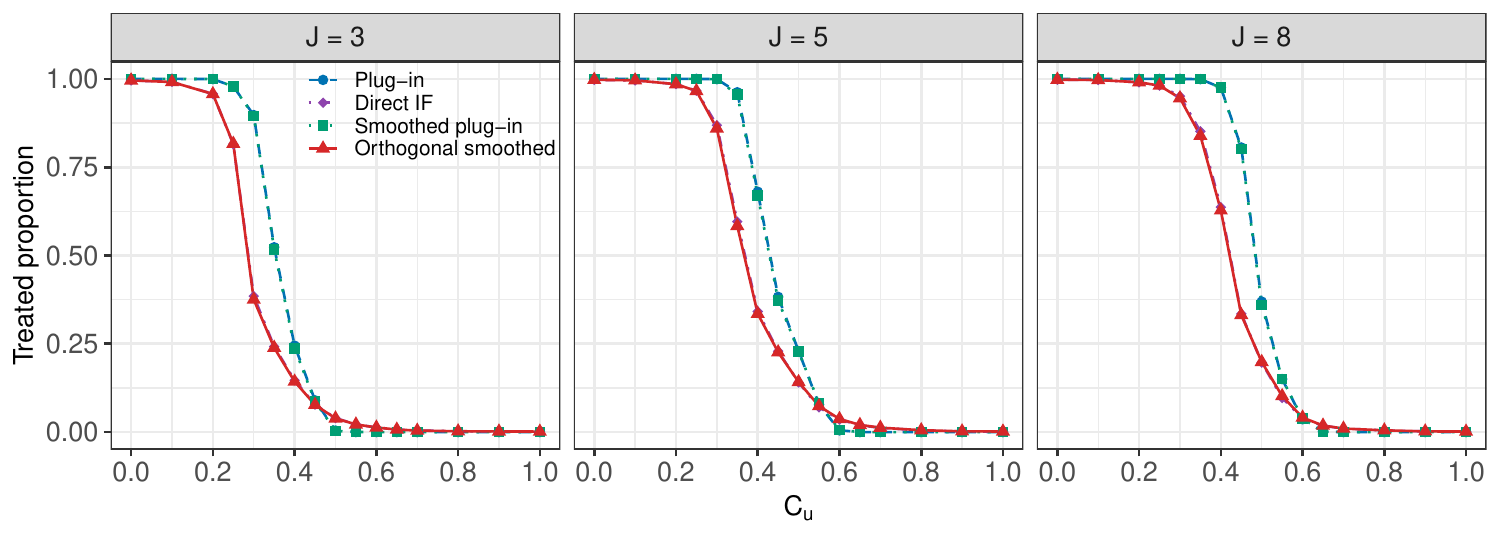}
\caption{Mean treated proportion over 500 replications as the utility threshold $C_u$ varies, with sample size $n=5000$.
Columns correspond to $J=3$, 5, and 8 outcome levels.}
\label{fig:trpr-n5000}
\end{figure}

\clearpage
\section{Additional SIPP results}\label{app:sipp-results}

For the nuisance models in Section~\ref{sec:app}, age and income-to-poverty ratio are standardized before resampling. Education is coded from 1 to 4 and enters linearly; gender, Black race, Asian race, Hispanic ethnicity, marital status, insurance coverage, and work limitation enter as binary indicators. Outcome probabilities $\hat m_j(a,X)$ are predicted with treatment set to $a=0$ and $a=1$, and propensity predictions are truncated to $[0.05,0.95]$.

In each bootstrap replication, individuals are randomly partitioned into three folds of 2773, stratified by treatment and outcome so that each stratum's fold counts differ by at most one. We then draw 2773 records with replacement within each fold, keeping all copies of an individual together. The folds rotate cyclically through nuisance estimation, policy learning, and evaluation, with each fold serving each role once. The partition is redrawn in every replication, while fold assignments and fitted nuisance models are shared across methods and utility thresholds within a replication. Rotation-specific treatment proportions and worst-case regret estimates are averaged equally before computing bootstrap summaries.

The utility-threshold grid combines $0.1$-$0.6$ in increments of $0.05$ with $0.24$-$0.34$ in increments of $0.01$, giving 20 distinct values. We normalize $u^b-u^n=1$, with $u^b=1-C_u$ and $u^n=-C_u$. For the smoothed methods, $\beta_n=2n_{\mathrm{score}}^{1/4}\approx14.51$, where $n_{\mathrm{score}}=2773$ is the number of observations used to compute the scores.

Table~\ref{tab:sipp-Rsup-new} reports treated proportions (Panel A) and estimated worst-case regret (Panel B) at selected thresholds. Regret is evaluated on held-out samples using the corresponding estimators in Table~\ref{tab:estimators}. The common policy-independent additive constant is omitted as in Section~\ref{sec:smooth-approx}, so negative entries do not indicate negative regret.

\begin{table}[htbp]
\renewcommand{\baselinestretch}{1}\small
\centering
\caption{Bootstrap means (SDs) of the SIPP application at selected utility thresholds over 100 bootstrap replications}
\label{tab:sipp-Rsup-new}
\small
\setlength{\tabcolsep}{5pt}
\begin{tabular}{@{}crrrr@{}}
\toprule
$C_u$ & Direct plug-in & Direct IF & Smoothed plug-in & Orthogonal smoothed \\
\midrule
\multicolumn{5}{c}{A: Treated proportion} \\
\midrule
0.10 & {0.999 (0.001)} & {0.962 (0.014)} & {1.000 (0.001)} & {0.977 (0.007)} \\
0.20 & {0.997 (0.002)} & {0.883 (0.043)} & {0.997 (0.002)} & {0.916 (0.027)} \\
0.25 & {0.966 (0.023)} & {0.718 (0.072)} & {0.981 (0.014)} & {0.799 (0.061)} \\
0.26 & {0.921 (0.057)} & {0.667 (0.086)} & 0.962 (0.033) & {0.749 (0.074)} \\
0.27 & {0.819 (0.108)} & {0.605 (0.089)} & {0.919 (0.063)} & {0.695 (0.085)} \\
0.28 & {0.654 (0.162)} & {0.530 (0.086)} & {0.821 (0.117)} & {0.641 (0.092)} \\
0.29 & {0.455 (0.184)} & {0.464 (0.088)} & {0.660 (0.167)} & {0.576 (0.094)} \\
0.30 & {0.259 (0.161)} & {0.394 (0.089)} & {0.461 (0.184)} & {0.497 (0.088)} \\
0.35 & {0.012 (0.028)} & {0.172 (0.063)} & {0.015 (0.031)} & {0.202 (0.058)} \\
0.60 & 0.000 (0.000) & {0.008 (0.005)} & 0.000 (0.000) & {0.005 (0.003)} \\
\midrule
\multicolumn{5}{c}{B: Estimated worst-case regret} \\
\midrule
0.10 & {-0.376 (0.018)} & {-0.368 (0.018)} & {-0.438 (0.017)} & {-0.448 (0.018)} \\
0.20 & {-0.176 (0.018)} & {-0.161 (0.021)} & {-0.224 (0.019)} & {-0.220 (0.021)} \\
0.25 & {-0.077 (0.017)} & {-0.058 (0.020)} & {-0.109 (0.019)} & {-0.100 (0.024)} \\
0.26 & {-0.058 (0.017)} & {-0.041 (0.020)} & {-0.087 (0.019)} & {-0.076 (0.023)} \\
0.27 & {-0.041 (0.016)} & {-0.024 (0.018)} & {-0.065 (0.018)} & {-0.054 (0.023)} \\
0.28 & {-0.026 (0.014)} & {-0.010 (0.014)} & {-0.045 (0.017)} & {-0.035 (0.021)} \\
0.29 & {-0.015 (0.011)} & {-0.002 (0.013)} & {-0.028 (0.014)} & {-0.019 (0.017)} \\
0.30 & {-0.008 (0.008)} & {0.005 (0.013)} & {-0.016 (0.011)} & {-0.005 (0.016)} \\
0.35 & {-0.000 (0.001)} & {0.017 (0.012)} & {-0.000 (0.001)} & {0.016 (0.012)} \\
0.60 & 0.000 (0.000) & {0.005 (0.004)} & 0.000 (0.000) & {0.003 (0.002)} \\
\bottomrule
\end{tabular}
\end{table}

\end{appendices}

\end{document}